\documentclass[11pt,showkeys]{revtex4-2}
\usepackage[utf8]{inputenc}

\usepackage{mathtools} 
\usepackage{amsmath}
\usepackage{color}
\usepackage{amsfonts}
\usepackage{natbib}
\usepackage{amsthm}
\usepackage{url}
\usepackage{hyperref}
\usepackage{graphicx}
\usepackage{quotes}  
\usepackage{boxedminipage}
\usepackage{nccmath}
\usepackage{bbm}
\usepackage{times}
\usepackage[T1]{fontenc}
\usepackage{float}
\usepackage{enumitem}
\renewcommand{\theenumi}{(C\arabic{enumi})}
\usepackage{booktabs}
\usepackage{thmtools}
\usepackage{subcaption}
\usepackage{multirow}
\usepackage{url}
\usepackage{pgfplots}
\pgfplotsset{compat=newest}
\usepackage{bm}
\usepackage{footnote}
\newcommand{\bfitDelta}{\bm{\mathit{\Delta}}}

\bibliographystyle{chicago}

\usepackage[ruled, lined, linesnumbered, commentsnumbered, longend]{algorithm2e}

\SetCommentSty{mycommfont}



\usepackage{makecell, cellspace, caption}
\usepackage{subcaption}
\setlength\cellspacetoplimit{3pt}
\setlength\cellspacebottomlimit{3pt}
\usepackage{array}
\newcolumntype{L}[1]{>{\raggedright\let\newline\\\arraybackslash\hspace{0pt}}m{#1}}
\newcolumntype{C}[1]{>{\centering\let\newline\\\arraybackslash\hspace{0pt}}m{#1}}
\newcolumntype{R}[1]{>{\raggedleft\let\newline\\\arraybackslash\hspace{0pt}}m{#1}}

\newcommand{\comment}[1]{}

\DeclareMathOperator*{\argmax}{arg\,max}
\DeclareMathOperator*{\argmin}{arg\,min}

\theoremstyle{plain}

\newtheorem{lemma}{Lemma}

\theoremstyle{definition}
\newtheorem{theorem}{Theorem}
\newtheorem{definition}{Definition}
\newtheorem{example}{Example}
\newtheorem{remark}{Remark}

\usepackage[margin=1in]{geometry}

\usepackage{tikz}
\usepackage{tikz-3dplot}

\renewcommand\thmcontinues[1]{Continued}

\tdplotsetmaincoords{60}{110}
\pgfmathsetmacro{\rvec}{.8}
\pgfmathsetmacro{\wvec}{30}
\pgfmathsetmacro{\phivec}{60}

\begin{document}

\title{Feature-Based Network Construction: From Sampling to What-if Analysis}

\author{Christian Franssen}
\email{c.p.c.franssen@vu.nl}
 \affiliation{Dept. Operations Analytics, VU Amsterdam, The Netherlands. }

\author{Joost Berkhout}%
 \email{joost.berkhout@vu.nl}
\affiliation{Dept. of Mathematics, VU Amsterdam, The Netherlands.}

\author{Bernd Heidergott}%
 \email{b.f.heidergott@ivu.nl}
\affiliation{Dept. Operations Analytics, VU Amsterdam, The Netherlands.}%

\begin{abstract}
Networks are characterized by structural features, such as degree distribution, triangular closures, and assortativity.
This paper addresses the problem of reconstructing instances of continuously (and non-negatively) weighted networks 
from given feature values.
We introduce the gradient-based Feature-Based Network Construction (FBNC) framework.
FBNC allows for sampling networks that satisfy prespecified features exactly (hard constraint sampling).
Initializing the FBNC gradient descent with a random graph, FBNC can be used as an alternative to exponential random graphs in sampling graphs conditional on given feature values.
We establish an implicit regularization approach to the original feature-fitting loss minimization problem so that FBNC achieves a parsimonious change in the underlying graph, where the term "implicit" stems from using appropriate norms in the very construction of the FBNC gradient descent.
In constructing the implicit regularization, we distinguish between the case where weights of a link can be chosen from a bounded range, and, the more demanding case, where the weight matrix of the graph constitutes a Markov chain.
We show that FBNC expands to 
"what-if analysis" of networks, that is, for a given initial network and a set of features satisfied by this network, FBNC finds the network closest to the initial network with some of the feature values adjusted or new features added.
Numerical experiments in social network management and financial network regulation demonstrate the value of FBNC for graph (re)construction and what-if analysis.
\end{abstract}

\keywords{Network construction, steepest feasible descent, network what-if analysis, implicit regularization}

\maketitle

\section{Introduction}

Networks, broadly defined as systems of interconnected entities or nodes, serve as a powerful framework for modeling complex relationships across a wide range of domains. 
In this paper we identify networks through their weight matrices $ W $, where $ W_{ i j } \geq 0  $ denotes the weight of link $ ( i , j ) $, with $ 1 \leq i , j \leq N $, with $N$ denoting the number of nodes in the network.
In the literature, networks are conveniently classified and analyzed using {\em structural features}, or just \textit{features} in short.
Formally, any feature $ \Phi $ is a mapping of $ W $ to $ \Phi ( W ) \in \mathbb{R}^d $, for $ d \geq 1 $.
Examples of some well-known and widely used features include the \textit{node strengths} \citep{barrat2004architecture}, {\em triangular closures} and {\em modularity} expressing the extent of clustering \citep{luce1949method,Newman_2006,Leicht_Newman_2007}, {\em assortativity} for describing the tendency for nodes to be connected with nodes that are similar \citep{newman2002assortative}, {\em centrality} providing information about the relative importance of nodes and links \citep{bonacich1987power,borgatti2005centrality}, and the {\em Kemeny constant} 
\citep{kemeny1976finite, berkhout2019analysis} or
\textit{effective graph resistance}, resp. \textit{Kirchhoff index} \citep{ghosh2008minimizing,bianchi2019kirchhoffian}, which can be used for measuring network connectivity.

In this paper, we show how to construct non-negative and finite weighted digraphs satisfying prespecified features \textit{exactly}, that is, we introduce an approach to numerically solve the inverse problem  of 
finding $ W $ so that $ \Phi_i ( W ) = \phi_i $, for $i=1,\dots, m$, for some array of $m$ features $ (\Phi_1 , \ldots, \Phi_m ) $ and feature values $(\phi_1 , \ldots, \phi_m )$.
We introduce a gradient descent approach, called  Feature Based Network Construction (FBNC), for numerically computing this inverse.

{\bf Hard Constraint Sampling.} 
FBNC enables hard constraint sampling of networks,  that is, FBNC samples networks that satisfy prespecified features exactly.
Hard constraint sampling offers several key applications. 
First, it allows for \textit{generating network instances with limited data}; see, e.g., \citet{Mastrandrea_Squartini_Fagiolo_Garlaschelli_2014, Squartini_Caldarelli_Cimini_Gabrielli_Garlaschelli_2018, Engel_Pagano_Scherer_2019} for related works.
When detailed information is unavailable due to privacy concerns (common in social and financial networks), FBNC allows the creation of networks that meet high-level, known features. Secondly, it is applicable for \textit{imputing missing network data}, which is important since network data is often incomplete in practice. FBNC generates plausible data that adheres to known high-level features for a given network type. Thirdly, it can be applied for \textit{random network generation} \citep{coolen2017generating} by simply randomly initializing the gradient descent FBNC algorithm. 

Hard constraint sampling differs fundamentally from the weak constraint sampling approach, such as exponential random graphs \citep{holland1981exponential,snijders2002markov,handcock2010modeling}, which is predominant in the literature. The term  "weak" 
means that the sampled networks satisfy the desired feature only on average, resulting in a (macro)-canonical ensemble. In contrast, hard constraint sampling requires that each sampled network satisfies the constraint exactly. This strict adherence characterizes hard constraint sampling as drawing from the microcanonical ensemble. For a comprehensive discussion and relevant literature, see Section~\ref{sec:application1}.
Throughout this article, we refer to "hard constraint sampling" simply as "sampling" unless additional clarification is needed.

{\bf Implicitly Regularized Hard Constraint Sampling.}
A key observation for our approach is that the equivalence class of networks with identical feature values is typically large. 
Hence, the image of the inverse mapping is typically a large set.
This opens the opportunity for encoding preference relations for network realizations when computing the inverse.
A powerful instance of such a preference relation is to find networks in the solution set that are "close" to some given network. 
Here, next to finding $ W $ that such that $ \Phi_i ( W ) = \phi_i $, for $i=1,\dots, m$, for some array of features $\Phi_i $ and corresponding values $\phi_i$, we wish to simultaneously minimize the $ L^1 $ or $ L^2 $-distance of $ W $  to a given network $ W_0 $.
In the literature, this is commonly dealt with by {\em explicit} regularization where a regularizing term is added to the objective function, such as Ridge or Lasso regularization \citep{hoerl1970ridge, lasso}.
However, we show that this approach does not produce networks satisfying the hard feature constraints.  
To overcome this problem, we incorporate the closeness preference by ensuring the gradient FBNC's descent trajectory is {\em implicitly} regularized via efficient steepest feasible descent computations in both the $L^1$ and $L^2$-norm. 
As we show, using the steepest feasible descent in the $L^1$-norm results in descent steps exclusively changing the weights of links that decrease the loss most effectively.
This is particularly interesting as the network can establish a newly desired feature while only changing a limited set of links.
We call this an implicitly regularized hard constraint sampling approach as the hard constraint sampling is still guaranteed, while directing the algorithm to prioritize minimal network changes whenever feasible. 

Being able to implicitly regularize network changes, our FBNC algorithm allows for another key application: {\em feature-based what-if analysis} of networks. 
By generating feature-satisfying networks that do not change too much from an existing network, our FBNC algorithm allows for meaningful what-if analyses of networks for various feature values.  For example, we may study how a given network can be changed to increase its connectivity efficiently (for which we provide examples in Section~\ref{sec:what-if}).
Moreover, the resulting network trajectory of our FBNC algorithm can be a road map to realize an additional network feature. We consider this to be a paradigm shift in (gradient) descent-driven optimization, where not only the final solution of an algorithm is made fruitful, but its entire trajectory provides valuable insights. 

The paper is organized as follows.
Section~\ref{sec:technical_fbnf} presents the technical analysis of FBNC and the algorithmic theory is developed in Section~\ref{sec:algorithm}.
We continue by demonstrating how to use our FBNC algorithm for random network sampling
in Section~\ref{sec:application1}. 
In Section~\ref{sec:application2}, we present
a numerical example showing how the what-if analysis can help financial regulators assess the role of exposure diversification on systemic risk in financial networks.
We briefly conclude in Section~\ref{sec:discussion}.

\section{Feature-Based Network Construction} \label{sec:technical_fbnf}

In this section, we formally introduce the FBNC framework.
We start by briefly introducing the required notation in Section~\ref{sec:not}.
Then, we outline some of the most commonly used features in Section~\ref{sec:features} and we provide the FBNC problem formalizations in Section~\ref{sec:probformulation}.
We conclude with a discussion on (implicit) regularization approaches for solving the FBNC problem in Section~\ref{sec:ir}.

\subsection{Preliminary notation} \label{sec:not}

Denote an unweighted directed network by $G=(V,E)$, where $V = \{1,\dots,N\}$ denotes the set of $N$ \textit{nodes}, $E \subseteq V \times V$ the set of \textit{links} between nodes.
For $i \in V$, we denote by $ V_i = \{ j \in V  : (i,j )  \in E \}  $ the set of out-neighbors of $i$.
In the following, we assume that $ V_i $ contains at least one node $ j $ with $ i \not = j $; in words, $ i $ has at least one node different from itself as an out-neighbor.
Similarly, we denote the set of in-neighbors of $i$ by $ {_i } V = \{  j \in V :  (j,i )  \in E \}  $.
We extend this framework  by introducing weights $W_{ i , j } \geq 0 $ for links $ ( i , j ) \in E $,
where we let $ W_{ i ,j } = 0 $ for $ ( i , j ) \not \in E $.
Note that for $ ( i , j ) \in E $ the weight may be zero (i.e., $ W(  i , j )  = 0 $), and we interpret this as the link $ ( i , j ) $ being present but not ``active'', which is in contrast to $ ( i , j )  \not \in E$ as this means that link $ ( i , j )$ is not present and cannot be made active by increasing the weight.
Recall that with slight abuse of notation, we often use $W$ and graph $G = (V,E,W)$ as synonyms and omit notation representing possible dependency on $(V,E)$ for simplicity.

\subsection{A Review of (Structural) Features} \label{sec:features}

For a given network $G = (V,E)$ (or $W$), a feature function can be defined on the unweighted topology of the network, that is, on the link set $E$, or alternatively, on the weights of this topology, which are confined to $W$.
Note that while an infinite space of feature functions essentially exists, network theory typically concentrates on a limited set of predominantly utilized network features.
Two of the most commonly used feature functions are the in-degree $\delta_i^- = | _i V |$ and out-degree $\delta^+_i = |V_i | $ counting the number of links going towards and leaving a node $i$, respectively (or in the case of undirected networks, simply the degree).
We define the weighted versions of the in-strengths 
\begin{equation}\label{eq:s-}
 s_i^- = \sum_{j \in _i V} W_{ j , i} 
\end{equation}
as the sum of link weights pointing towards $i$, 
and the out-strengths as the sum of link weights leaving a node given by
\begin{equation}\label{eq:s+}
 s_i^+ = \sum_{j \in  V_i} W_{ i,j} 
\end{equation}
see \citep{barrat2004architecture}.

In certain contexts, it is desirable to model a graph using a weighted adjacency matrix $W$ that satisfies the conditions of stochasticity, whereby its entries are non-negative, and the sum of each row is equal to 1. 
Transition matrix $W$ then models the behavior of a random walker on the graph. 
This is a commonly employed method; for example, see \citet{patel2015robotic} and \citet{berkhout2019analysis}. From $W$, we can derive specific features yielding insights into the network structure, such as the stationary distribution and the Kemeny constant.

Table~\ref{tab:feature_examples} provides a selection of commonly used features, as well as a short description and a reference to their mathematical definitions in Appendix~\ref{sec:sf}, which are suitable for the network construction algorithm presented in Section~\ref{sec:algorithm}.

\begin{table}
\centering
\caption{Overview of commonly used features functions. Features with a $^\dag$ are defined using the transition matrix satisfying the conditions of stochasticity.}
\begin{tabular}{llll} 
            \hline
            \textbf{Feature function} \hspace{5pt} & \textbf{Symbol} \hspace{5pt} & \textbf{Description} & \textbf{Definition} \\
            \hline
            In-strength &$s^- = (s_i^-)_{i \in V}$ \hspace{5pt}& The sum of link weights reaching each node $i$. &  (\ref{eq:s-}) \\
            \hline
            Out-strength &$s^+ = (s_i^+)_{i \in V}$  & The sum of link weights leaving each node $i$. &   (\ref{eq:s+}) \\
            \hline
            Reciprocity &$r$  & The extent to which nodes reciprocate incoming \hspace{5pt} & Appendix \ref{sf:reciprocity}\\
             & & links.& \\
            \hline
            Triangular & $c$  & The extent to which nodes form closed triplets. & Appendix \ref{sf:3cycles} \\
            closures & & & \\
            \hline
            Modularity &$Q$ & The extent to which nodes are grouped in  & Appendix \ref{sf:mod} \\
            & & communities or clusters. &  \\
            \hline
            Assortativity &$\rho$ & Tendency for nodes to link with others that are  & Appendix \ref{sf:assortativity} \\
            & & similar in node covariates or a function over the   \\
            & & network itself (e.g., node strengths). & \\
            \hline
            Stationary &$\pi = (\pi_i)_{i \in V}$ & The long-run fraction of time the random walker  & Appendix \ref{sf:stat}\\
             $\text{distribution}^\dag$& & spends at each node $i$. &   \\ 
            \hline
            Kemeny  &$K$ & The expected time to reach a $\pi$-randomly drawn  & Appendix \ref{sf:kemeny1} \\
            $\text{constant}^\dag$ & & node from any arbitrary node. & \\
            \hline
            Effective graph \hspace{3pt} &$R$ & The sum of electrical resistances in a resistor  & Appendix \ref{sf:effective_resistance} \\ 
            resistance & & network with edge weights as conductances. &  \\
            \hline
\end{tabular}
\label{tab:feature_examples}
\end{table}

\subsection{The Hard Constraint Sampling Problem}\label{sec:probformulation}

In the context of optimization (with $ G = (V,E) $ fixed), let $w$ be the control variable that is a (vector) parametrization of the weights in $W$.
For $ i \in V$, we denote by 
\[
 w_{(i, \cdot )  }    = ( W_{ i, j } : j \in V_i )
\] 
the vector of weights in the $ i$th row of $ W $ that are feasible for optimization.
The concatenation of all such row vectors  $ w_{ ( i , \cdot ) }$ is denoted by 
\begin{eqnarray} \label{eq:parameterization}
    w =  w_{ (1 , \cdot)} \mathbin\Vert  w_{(2, \cdot )} \mathbin\Vert \: \dots  \:\mathbin\Vert w_{ ( |V| , \cdot ) }  ,
\end{eqnarray}
where $ x \mathbin\Vert  y $ denotes the concatenation to vectors $ x $ and $ y $.
This allows to express a weight matrix $ W$ through a corresponding weight vector $ w \in \mathbb{R}^{|E|}$. To that end, let $e_{i} \in \mathbb{R}^N$ denote the $i$-th basis vector in $\mathbb{R}^N$ and let $(\cdot)'$ denote the transposition of a row-vector. The weight matrix can be retrieved from $ w $ in a unique way by 
\[
W = \sum_{(i,j) \in E } w_{i,j} e_{i} e_{j}',
\]
where for simplicity, we write $ w_{ i , j }$ for the element in (vector) $ w$ that corresponds to entry $ ( i , j ) $ of $ W$, for $ ( i , j ) \in E$.

For our analysis, we distinguish between the \textit{bounded setting},
where a hypercube gives the set of feasible values of $ W $,
and the more restrictive \textit{Markov setting}, where 
$ W $ is only feasible if $ W $ is a stochastic matrix.
This is introduced in the following definition.

\begin{definition} \label{def:W}
The set of feasible weight matrices is denoted by $ {\cal W} \in \{ \mathcal{W}^{b}, \cal{W}^M\}$.
\begin{itemize}

\item In the {\em bounded setting}, $ \cal W$ is given as hypercube $ {\cal W}^b := [0 , b ]^{|E|} $, for $ b > 0 $.

\item In the {\em Markov setting}, $ {\cal W} $ is the set ${\cal W}^M$ of all stochastic matrices over $ ( V, E ) $, i.e.,
\[
{\cal W}^M : = \{ w \in [ 0 , 1 ]^{|E|} : \sum_{j \in V_i} w_{i , j } =1, \text{ for all } i \in V \},
\]
and we call graph $W \in \mathcal{W}^M$ a \textit{Markovian graph}.
\end{itemize}
\end{definition}

The goal of FBNC is to enforce several features in a network, such that the constructed network $\Hat{W}$ satisfies $\Phi_i ( \hat W ) = \phi^*_i$, $ i = 1, \ldots, m$, for the feature functions $\Phi_1 , \ldots , \Phi_m$ and feature values $\phi^*_1 , \ldots , \phi^*_m$.
This can be expressed as minimization of the loss function
\begin{align}\label{eq:lossf}
    J ( W , \Phi  , \phi^*) := \sum_{i=1}^m \left\| \Phi_i ( W ) - \phi_i^* \right\|_2^2,
\end{align}
where for ease of presentation, we assume that the target values in $\phi^*$ are scalar.
Note that we will often interchangeably use $w$ and $W$, thus define feature functions $\Phi(w) = \Phi(W)$, and use $J(w)$ to refer to $J(w, \Phi, \phi^*)$ throughout the rest of the paper.
Thus, in terms of $w$ instead of $W$,  FBNC constructs networks from the set
\begin{align} \label{eq:pf_theta}
{\cal W}_{\Phi,\phi^*} := \argmin_{w \in {\cal W} } \ J(w, \Phi, \phi^*),
\end{align}
where $ {\cal W} \in \{ {\cal W}^b , {\cal W}^M\}$ is the set of feasible weight values,
and $\Phi$ is assumed to be a smooth map.
If features $\Phi$ with values $\phi^*$ are graphical, then by (\ref{eq:pf_theta})
$$
{\cal W}_{\Phi,\phi^*} = \{ w \in {\cal W} : J(w,\Phi,\phi^*) = 0\}  \not = \emptyset .
$$
It follows that a gradient-based descent algorithm is generally sufficient for solving \eqref{eq:pf_theta} (we will discuss our algorithmic approach in Section~\ref{sec:algorithm}). 
In case non-global minima are found, one can reject the solution and restart with different initial weights.

A key observation is that one is typically interested in only a handful of features (we will demonstrate this using the numerical examples in this paper) which renders $ {\cal W}_{\Phi,\phi^*}$ to be a large set.
Hence, when applying a gradient descent algorithm to finding a minimizer for (\ref{eq:pf_theta}), the 
limiting network of the gradient descent solution will depend on the initial value.
Figure~\ref{fig:trajectories} shows this schematically, ${w}^1$ and ${w}^2$ 
represent two distinctive initial network instances, the dotted line present the path of the gradient descent algorithm, and ${w}_1$ and ${w}_2$ the respective limiting values.

\begin{figure}[ht]
\centering
\begin{tikzpicture}[->,>=stealth',auto,node distance=3cm]
    \draw (3.0,1.0) circle (1.4) 
    (3.45,1.35) node [text=black,above]{$\mathcal{W}_{\Phi,\phi^*}$}
    (-1.0,-1.0) rectangle (5,3) 
    (-0.75,2.5) node [text=black,above]{$\mathcal{W}$};
    
    \coordinate[label = above:$w^1$] (A) at (0,0);
    \node at (A)[circle,fill,inner sep=1pt]{};

    \coordinate[label = above:$w^2$] (B) at (1,2);
    \node at (B)[circle,fill,inner sep=1pt]{};

    \coordinate[label = above:${w}_1$] (C) at (3.05,0.2);
    \node at (C)[circle,fill,inner sep=1pt]{};

    %

    \coordinate[label = below:${w}_2$] (D) at (2.4,1.8);
    \node at (D)[circle,fill,inner sep=1pt]{};
    \node at (0.6,-0.4) [circle,inner sep=1pt]{}; 
    \node at (3.5,2.4) [circle,inner sep=1pt]{}; 
    \path[every node/.style={font=\sffamily\small}]
        (A) edge[bend right, dashed] node [below right] {} (C);
    %

    \path[every node/.style={font=\sffamily\small}]
        (B) edge[bend left, dashed] node [below right] {} (D);
        
    \end{tikzpicture}
    \caption{Schematic illustration of two networks obtained through adjusting link-weights so that target feature values are met.
    Both networks ${w}_1$ and ${w}_2$ depend on the initial network given to a gradient-based algorithm.}
    \label{fig:trajectories}
\end{figure}
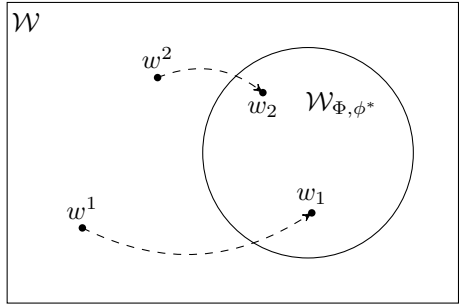

Note that $ w_1 $ and $w_2 $ solve the hard constraint sampling problem.
The fact that the limiting values $ w_i $ depend on the initial values $ w^i$ allows for a straightforward application of this approach to randomly sampling from $ {\cal W}_{\Phi,\phi^*} $.
Indeed, randomly initializing the gradient-based descent algorithm yields random limiting points, and for details we refer to Section~\ref{sec:application1}. 

\subsection{The Regularized Hard Constraint Sampling Problem} \label{sec:ir}

For the specific purpose of what-if analysis, the additional goal is to minimize the changes in the network while meeting a new target feature. 
To formalize, let us view the collection of graphs $ ( V, E, {\cal W} )$ as a normed vector space $\mathcal{V}_p := \left(\mathbb{R}^{|E|}, \| \cdot \|_p\right)$, where $p \in \{1, 2\}$, so that $ \| \cdot \|_1 $ refers to the taxicab norm, and $ \| \cdot \|_2 $ refers to the Euclidean norm.
Furthermore, let $w(0)$ denote the weights of the initial network.
Similar to \eqref{eq:pf_theta}, our primal objective is to generate a graph from the target set ${\cal W}_{\Phi,\phi^*}$.
Secondarily, we strive to minimize the difference between the starting graph $w(0)$ and 
the solution $w$: $\| w - w(0) \|_p$, $p \in \{1,2\}$, as to regularize the network changes.
This admits the following problem formulation
\begin{align} \label{eq:what_if_pf}
    \begin{array}{rl}
    \min \quad &  \| w - w(0)\|_p \\
    \textup{s.t.} \quad & w \in {\cal W}_{\Phi,\phi^*} ,
    \end{array}
\end{align}
with $  {\cal W}_{\Phi,\phi^*} $ given via  (\ref{eq:pf_theta}).
This formulation essentially defines the problem as a projection in $ \| \cdot \|_p $ onto the space ${\cal W}_{\Phi,\phi^*}$. 
Due to the complex structure of $ {\cal W}_{\Phi,\phi^*} $, a solution to problem (\ref{eq:what_if_pf}) is generally very difficult to find, except for special cases.
This motives the search for approximate solutions of (\ref{eq:what_if_pf}).

A common heuristic for approximately solving \eqref{eq:what_if_pf} is through adding a penalty term to the objective function:
\begin{align} \label{eq:pf_theta_reg_exp}
\argmin_{w \in {\cal W} } \ J(w) + \beta \left\|w - w(0)\right\|_p^p,
\end{align}
for some $\beta >0$, also known as Ridge regression in case $ p=2 $, see, e.g., \citet{ridge1,ridge2}, and Lasso regression when $ p =1 $, see, e.g., \citet{lasso}.
This approach is referred to as \textit{explicit regularization} in the literature.
However, as we illustrate in Section~\ref{sec:gh}, the solution set of \eqref{eq:pf_theta_reg_exp} does not align with the target set $\mathcal{W}_{\Phi, \phi^*}$, and in fact, any solution of \eqref{eq:pf_theta_reg_exp} will only approach the target set, see Figure~\ref{fig:trajectories_wi} for a schematic illustration. Thus, this approach is infeasible for solving \eqref{eq:what_if_pf}.
An alternative and more fruitful approach for addressing \eqref{eq:what_if_pf} is using a steepest feasible descent algorithm, where the steepest feasible descent direction is found within the desired normed vector space $\mathcal{V}_p$.
As we will detail in Section~\ref{sec:algorithm}, we can do so by defining the steepest feasible descent in the desired norm $L^p$-norm, e.g., see \citet{byrd1975extension}.
Then, using such a steepest descent in the FBNC algorithm, the distance $\|w - w(0)\|_p$ is \textit{implicitly regularized} as at every step $k$ the loss function $J(w(k))$ is greedily minimized and so guides the trajectory along a path that minimizes distance within $\mathcal{V}_p$. Numerical examples illustrating this behaviour are presented in Section~\ref{sec:algorithm}.

\begin{figure}[ht]
\centering
\begin{tikzpicture}[->,>=stealth',auto,node distance=3cm]
    \draw (3.0,1.0) circle (1.4) 
    (3.45,1.35) node [text=black,above]{$\mathcal{W}_{\Phi,\phi^*}$}
    (-1.0,-1.0) rectangle (5,3) 
    (-0.75,2.5) node [text=black,above]{$\mathcal{W}$};
    
    \coordinate[label = below:$w(0)$] (A) at (-0.35,0);
    \node at (A)[circle,fill,inner sep=1pt]{};

    \coordinate[label = {[label distance = 0.005pt] right:$\hat{w}_{\mathrm{imp}}$}] (Z) at (1.6,1);
    \node at (Z)[circle,fill,inner sep=1pt]{};

    \coordinate[label = below:$\hat{w}_{\mathrm{exp}}$] (C) at (1.5,0.45);
    \node at (C)[circle,fill,inner sep=1pt]{};

    \node at (3.5,2.4) [circle,inner sep=1pt]{};

    \path[every node/.style={font=\sffamily\small}]
        (A) edge[bend left, dashed] node [below right] {} (C);
    \path[every node/.style={font=\sffamily\small}]
       (A) edge[bend left, dashed] node [below right] {} (Z);


\end{tikzpicture}

    \caption{Schematic illustration of two networks obtained through adjusting link-weights 
    so that target feature values are met while applying explicit and implicit regularization. Here, $\hat{w}_{\mathrm{explicit}}$ is the result of using an explicit regularization and does not reach the target set. Instead, an implicitly regularized $\hat{w}_{\mathrm{implicit}}$ does reach the target set.}
    \label{fig:trajectories_wi}
\end{figure}
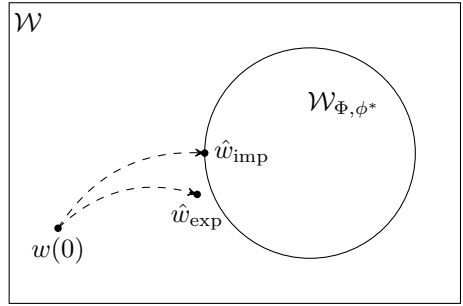

We conclude by remarking that when ${\cal W}_{\Phi,\phi^*}$ can be represented by linear (or linearizable) constraints, we can formulate problem \eqref{eq:what_if_pf} using (Mixed-Integer) Linear Programming. However, many network features exhibit non-linearity, rendering this approach impractical.

\section{Descent Algorithms for FBNC} \label{sec:algorithm}

In this section, we will detail our algorithmic approach to solve the FBNC problems in \eqref{eq:pf_theta} and \eqref{eq:what_if_pf}. 
Our steepest feasible descent algorithm is specifically designed for solving \eqref{eq:what_if_pf}, but since \eqref{eq:what_if_pf} is an extension of \eqref{eq:pf_theta}, we also use this algorithm for solving \eqref{eq:pf_theta}.

This section is organized as follows.
First, we introduce the steepest feasible descent algorithm and establish bounds on step sizes to ensure feasibility. 
In Section~\ref{sec:gh}, we explain how to compute the steepest feasible descent direction for graphs defined on hypercubes, considering both the taxicab and Euclidean norms. 
We also demonstrate that explicit regularization of the loss function is not viable within our framework. 
Additionally, we highlight the importance of small step sizes in our steepest descent algorithm, contrary to the common practice of using larger step sizes in steepest feasible descent methods in the existing literature.
We then proceed with explaining how to compute the steepest feasible descent direction for Markovian graphs, considering both the taxicab and Euclidean norms in Section~\ref{sec:MG}.
We conclude by showing how we can use a projection to approximate the steepest feasible descent direction in Section~\ref{sec:proj_sfd}.

\subsection{Steepest Feasible Descent}

For given $ \Phi $ and $ \phi^* $, the loss function $ J $, defined in (\ref{eq:lossf}), is a mapping from 
$\left( {\cal W}, \| \cdot \|_p\right)$ onto $\left(\mathbb{R}, \| \cdot \|\right) $.
We will impose throughout the paper the following condition:
\begin{itemize}
    \item [{\bf (A1)}]
    {\em $ J $ is (Fréchet) differentiable in all $w \in \{ w \in {\cal W}$ : $J(w) < \infty\}$, where $\mathcal{W} = \mathcal{W}^b$ in the bounded setting and $\mathcal{W} = \mathcal{W}^M$ in the Markov setting.}
\end{itemize}
Fréchet differentiability is ensured by the existence of a bounded linear operator $C(w)$ such that 
\begin{align} \label{eq:lim}
    \lim_{z \in \mathcal{Z}, z \neq w, z \to w} \frac{\| J( z ) - J( w ) - C (w) (z-w)) \|_p}{\|z-w \|_p} = 0, 
\end{align}
for $p=2$, and $\mathcal{Z} = \mathcal{Z}^b := \mathbb{R}^{|E|}$ in the bounded setting or $\mathcal{Z} = \mathcal{Z}^M := \{z \in \mathbb{R}^{|E|} : \sum_{j\in V_i} z_{i,j} = 1, \forall i\in V\}$ in the Markov setting.
Note that if \eqref{eq:lim} holds for $p=2$, the same $C(w)$ for $p=1$ ensures \eqref{eq:lim} is satisfied.
Condition {\bf (A1)} implies $ J ( w ) $ is a well-defined mathematical object not only on $ {\cal W}   $ but also in a neighborhood around $ {\cal W}$'s boundary. 
For simple features such as in- and out-strengths, this is easy to see.
For features based on the Markov setting (i.e., $ {\cal W} = {\cal W}^M$), we note that while $ w \in \mathcal{Z}^M \setminus {\cal W}^M$ has no interpretation as Markov chain (as $z$ contains at least one negative element), the analytical formulas for features and their derivatives extend beyond $ {\cal W}^M$.  
For a discussion of this and for explicit derivative expression using matrix calculus, we refer to \citet{Conlisk_1985} and \citet{caswell2019sensitivity}.

Furthermore, let $\nabla J(w) := C(w)$ in \eqref{eq:lim}. 
We require the following condition, which will be assumed throughout the paper:

\begin{itemize}

\item[{\bf (A2)}] $ \nabla J(w) $ is continuous in all $w \in \{ w \in {\cal W}$ : $J(w) < \infty\}$.

\end{itemize}
Combining {\bf (A1)} and {\bf (A2)}, we obtain that $J$ is continuously differentiable in all $w \in \{ w \in {\cal W}$ : $J(w) < \infty\}$, which we will use for the convergence of the steepest feasible descent algorithm presented later in this section.

Some features in Table~\ref{tab:feature_examples} do not immediately satisfy {\bf (A1)} and {\bf (A2)} . 
For example, modularity and assortativity, see Appendix \ref{sf:mod} and \ref{sf:assortativity}, are not defined for all in $W \in \mathcal{W}$ and therefore are not continuously differentiable everywhere.
In the case of modularity, it sufficient for continuous differentiability to assume that the graph has a positive sum of weights, i.e., we can impose an arbitrary small $\varepsilon > 0$ on all links $(i,j) \in E$. 
In the case of assortativity, imposing non-identical in-strengths ensures that a slightly adapted version of {\bf (A1)} and {\bf (A2)} are satisfied (we will elaborate more on this later in this section).
Moreover, the stationary distribution $\pi$ (Appendix \ref{sf:stat}) exclusively exists if $w$ induces an irreducible Markovian graph and therefore is not differentiable for all $w \in \mathcal{W}^M$.
The use of stationary distribution $\pi$ as feature can be allowed by assuming that $G = (V,E)$ is strongly connected and ensuring a Markovian graph $w$ is irreducible.
To achieve this, we can also impose an arbitrarily small $\varepsilon > 0$ on all links $(i,j) \in E$, so that {\bf (A1)} and {\bf (A2)} are satisfied.
Note that that the Kemeny constant $K$ (Appendix \ref{sf:kemeny1}) does satisfy {\bf (A1)} and {\bf (A2)} as continuous differentiability is only violated when $w \in \mathcal{W}^M$ induces a reducible Markovian graph. However, for these $w$ it holds $J(w) = \infty$ and hence such $w$ are no limiting points of any steepest decent algorithm. A similar argument can be made for the effective graph resistance (Appendix \ref{sf:effective_resistance}).

For $ w \in {\cal W}$, we denote the set of \textit{feasible directions} for $ w \in \mathcal{W}$ as 
\[
 \bfitDelta ( w ; \mathcal{W}, p ) := 
\{  d \in \mathbb{R}^{|E|} :   \: \| d \|_p \leq 1 \land \exists \hat \eta :   w + \eta d \in {\cal W } , \: 0 < \eta \leq \hat \eta \},
\]
where in the following, we will simply write $\bfitDelta ( w ; p )$.
In words, one can transverse from $ w $ along $ d \in \bfitDelta ( w ; p )  $ for a sufficiently small distance without leaving $ \cal W $.
For any feasible direction $ d $ at $ w$, we can define 
the \textit{directional derivative} along $ d $ through
\begin{align} \label{eq:dd}
\partial_d J(w ) = \lim_{\eta \to 0} \frac{J\left( w + \eta d \right) - J(w)}{\eta} .
\end{align}
A {\em steepest feasible descent} at $ w $ for norm $ p $ can now be defined through
\begin{align} \label{eq:sfd}
\delta( w , p )  \in \argmin_{ d \in \bfitDelta ( w ; p )  } \partial_d J(w) .
\end{align}
In case the set on the above right-hand side has more than one element, we choose $ \delta( w ; p )$ randomly from this set.
Now that we have introduced $ \delta ( w , p ) $, we can define FBNC formally.

\begin{definition}{\bf (FNBC and trajectory)}
The FBNC algorithm is a algorithmic mapping $ \cal F$, where $ \cal F$ is built as a steepest descent algorithm. To that end, we choose $ {\cal W} \in \{ {\cal W}^b,  {\cal W}^M\} $ and $p \in \{ 1, 2 \}$, and we denote the maximum number of descent steps by $ T \in \mathbb{N}$, the step size sequence by $\{ \alpha(k) \}$ with $  \alpha(k) > 0 $, and the loss function $ J ( w ) $ as in (\ref{eq:lossf}) for some feature functions $ \Phi$ with target feature values $\phi^*$.
For $ w \in {\cal W} $, the algorithmic mapping is specified by
\[
{\cal F}_{  (\mathcal{W}, p, J) } ( w ) :=  {\cal F}_{ (  \mathcal{W}, p , J ; \alpha , T , \gamma ) } ( w ) =: w (T ),
\]
with $\gamma$ denoting a convergence parameter, which, if not noted otherwise, is used to terminate the algorithm if $\|\Phi_i(w) - \phi_i^*\|_2 < \gamma$, for $i = 1, \ldots, m$.
We call the network sequence $ \{ w ( k ) : k = 0, \ldots, T \}$, 
with $ w ( 0 ) = w$ and
\begin{align}\label{eq:descentalgo}
   w (k+1 ) &= w ( k ) + \alpha(k) \delta ( w ( k ) , p )   \in {\cal W} , \qquad \forall k = 0, 1, 2, \dots, T-1 ,
\end{align}
the \textit{path} or \textit{trajectory} of $ {\cal F}_{ (\mathcal{W}, p, J) } ( w )$ (resp. FBNC), 
where $ \delta ( w ( k ) , p ) $ is a steepest feasible descent direction for $ p $ as in \eqref{eq:sfd} (again, we suppress $\mathcal{W}$ for notational convenience). 
\end{definition}

For the step size sequence $\{ \alpha(k) \}$ we take a fixed (generally small) $\alpha$ to implicitly regularize the trajectory $\{w(k)\}$. We will further elaborate on the need for small step sizes in Section~\ref{sec:gh}.
To ensure the feasibility of $\{ w(k) \}$, we bound the step size by a maximal step size possible.
This is an effective way of enforcing the non-negativity constraints of $\mathcal{W}$ and commonly applied in methods such as the Frank-Wolfe algorithm \citep{frank1956algorithm}.

For bounding the step size, we let $\alpha > 0$ denote a fixed constant (generally small if regularization is required). 
Assume $\delta( w ( k ) , p ) \neq 0$ so that for the bounded setting, i.e., $\mathcal{W} = \mathcal{W}^b$, we use
\begin{align} \label{eq:ss_bounded}
   \alpha(k) =  \min \biggl(\alpha, \min_{(i,j) \in E : \delta( w ( k ) , p )_{i,j}  \neq 0}  \biggl( \frac{w(k)_{i,j}}{-\delta ( w ( k ) , p )_{i,j}}\mathbbm{1}_{\{ \delta ( w ( k ) , p )_{i,j} < 0 \}} + \frac{b - w(k)_{i,j}}{\delta ( w ( k ) , p )_{i,j}}\mathbbm{1}_{\{ \delta ( w ( k ) , p )_{i,j} > 0 \}} \} \biggr) \biggr),
\end{align}
and in the Markov setting, i.e., $\mathcal{W} = \mathcal{W}^M$, we use
\begin{align} \label{eq:ss_markov}
    \alpha(k) = \min\left( \alpha,  \min_{(i,j) \in E: \delta( w ( k ) , p )_{i,j}   \neq 0} \left( \frac{w(k)_{i,j}}{-\delta ( w ( k ) , p )_{i,j}}\mathbbm{1}_{\{ \delta ( w ( k ) , p )_{i,j} < 0 \}} \} \right)\right),
\end{align}
where the second terms in both (outer) $\min\{\cdot, \cdot\}$ operators ensure feasibility of the trajectory $\{w(k)\}$. 

Moreover, to ensure implicit regularization, we let for each $k = 0,1,2,\dots, T-1$,
\begin{align} \label{eq:ds}
J(w (k )) \leq J( w ( k ) + \alpha(k) \delta ( w ( k ) , p ) ).
\end{align}
In words, at every $k$ we want to take a \textit{descent step}.
A standard numerical approach to ensure a descent step is the \textit{Armijo rule}, that for given $\alpha(k) > 0$, $\beta \in (0,1)$, and $\sigma \in (0,1)$, scales the step size according to $\beta^{\kappa} \alpha(k)$ \citep{armijo1966minimization, bertsekas1976goldstein}, where $\kappa$ is the first non-negative integer such that
the decrease $J(w(k)) - J(w(k+1))$, for $w(k+1) = w(k) + \beta^{\kappa} \alpha(k) \delta ( w ( k ) , p )$, is larger than the (scaled) linear approximation in $w(k+1)$, i.e., 
\begin{align} \label{eq:armijo}
J(w(k)) - J(w(k) + \beta^{\kappa} \alpha(k) \delta ( w ( k ) , p )) \geq - \sigma \beta^{\kappa} \alpha(k) \nabla_{\delta ( w ( k ) , p )} J(w(k)).
\end{align}
Here, $\nabla_{\delta ( w ( k ) , p )} J(w(k))$ is the directional derivative in the steepest feasible descent direction $\delta ( w ( k ) , p )$.
The Armijo rule ensures a natural trade-off between the distance between $w(k)$ and $w(k+1)$ in each step $k$ and the decrease of the loss function (hence helping in implicit regularization). 
Convergence is assured by satisfying {\bf (A1) } and {\bf (A2) } \citep{bertsekas1976goldstein}, where parameters $\sigma$ and $\beta$ are the tuning parameters that will be fixed $\sigma = \beta = \frac{1}{2}$ \citep{armijo1966minimization}, if not noted otherwise.

It is known that under conditions {\bf (A1) } and {\bf (A2)}, $ w(T) $ converges towards a stationary point, see Definition~\ref{def:stat_point}, of $ J ( w ) $ as $  T $ tends to $ \infty $, for which we provide an exact statement in the following in Theorem~\ref{thm:convergence}. 

\begin{definition}\label{def:stat_point}
    We call any $w \in \mathcal{W}$, such that $\delta(\Hat{w},p) = 0$, a stationary point of $J(w)$.
\end{definition}

\begin{theorem} \label{thm:convergence}
Let ${\cal W}$ be defined as in Definition \ref{def:W}. 
Furthermore, let $ J : {\cal W} \mapsto \mathbb{R}$ satisfy condition {\bf (A1)} and {\bf (A2)} and $J(w(0)) < \infty$. Then,
\[
\Hat{w} := \lim_{ T \rightarrow \infty } w (T ) .
\]
is a stationary point of $ J $, i.e., $\delta(\Hat{w},p) = 0$.
\end{theorem}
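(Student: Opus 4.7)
My plan is to follow the classical convergence framework for descent algorithms with an Armijo-style step size, adapted to the feasible-direction setting. First I would establish monotone decrease of the objective: whenever $\delta(w(k),p)\neq 0$, the minimizer property in \eqref{eq:sfd} gives $\partial_{\delta(w(k),p)}J(w(k))<0$, and the Armijo acceptance criterion \eqref{eq:armijo} then guarantees $J(w(k+1))<J(w(k))$. Since $J\geq 0$ as a sum of squared norms, the sequence $\{J(w(k))\}$ is bounded below and non-increasing, hence convergent; in particular, $J(w(k))-J(w(k+1))\to 0$.

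Next I would extract a subsequential limit by compactness. Both $\mathcal{W}^b$ and $\mathcal{W}^M$ are compact subsets of $\mathbb{R}^{|E|}$, and the iterates stay in $\mathcal{W}$ by the construction of $\alpha(k)$ in \eqref{eq:ss_bounded}--\eqref{eq:ss_markov}, so $\{w(k)\}$ has a convergent subsequence $w(k_j)\to\hat{w}\in\mathcal{W}$. To identify $\hat{w}$ as a stationary point, I would argue by contradiction: assume $\delta(\hat{w},p)\neq 0$, so $\partial_{\delta(\hat{w},p)}J(\hat{w})=-\varepsilon<0$. Using continuity of $\nabla J$ from \textbf{(A2)} together with a lower-semicontinuity argument for the steepest feasible direction (essentially: the projection of $\delta(\hat{w},p)$ onto the feasible cone at nearby $w(k_j)$, after renormalization, is a competitor in \eqref{eq:sfd} with directional derivative $\leq -\varepsilon/2$ for $j$ large), I would show the actual $\delta(w(k_j),p)$ achieves descent rate at most $-\varepsilon/2$. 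The Armijo condition then enforces a uniform strict decrease per step along the subsequence, contradicting $J(w(k_j))-J(w(k_j+1))\to 0$.

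The main obstacle will be the interplay between the feasibility cap on $\alpha(k)$ in \eqref{eq:ss_bounded}--\eqref{eq:ss_markov} and the Armijo backtracking: a priori, the cap could force $\alpha(k_j)\to 0$ as iterates approach the boundary of $\mathcal{W}$, which would vitiate the uniform-decrease argument. Resolving this amounts to verifying the usual Zoutendijk-type inequality, i.e.\ that the per-step decrease produced by \eqref{eq:armijo} is proportional to $\alpha(k_j)\cdot|\partial_{\delta(w(k_j),p)}J(w(k_j))|$ and that the admissible step length cannot collapse faster than the directional derivative remains bounded away from zero near $\hat{w}$; this is precisely the content of the Armijo convergence theorem of \citet{bertsekas1976goldstein} cited in the text, whose hypotheses are delivered by \textbf{(A1)} and \textbf{(A2)}. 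A secondary subtlety is that Theorem \ref{thm:convergence} as stated posits convergence of the whole trajectory, not just of a subsequence; I would either close this gap under an isolated-stationary-points assumption, or reinterpret the statement as subsequential convergence with the limit lying in the set of stationary points.
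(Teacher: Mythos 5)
Your proposal follows essentially the same route as the paper's own (sketched) proof: strict monotone decrease via the Armijo rule, $J(w(k))-J(w(k+1))\to 0$, and the classical Bertsekas--Goldstein feasible-direction convergence argument under \textbf{(A1)}--\textbf{(A2)} to conclude stationarity of the limit. You are in fact more careful than the paper on the two points its sketch asserts without justification --- namely that the feasibility cap in \eqref{eq:ss_bounded}--\eqref{eq:ss_markov} could drive $\alpha(k)\to 0$ near the boundary of $\mathcal{W}$ without $\delta(w(k),p)\to 0$, and that the theorem claims convergence of the whole trajectory rather than of a subsequence --- so your treatment is a valid (and somewhat tighter) version of the same argument.
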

\begin{proof}
For a complete formal proof, we refer to \citet{bertsekas1997nonlinear}. Here, we give a brief sketch of the proof.
Let $\Hat{w}$ be the limiting point of $\{w(k)\}$.
By continuity of $J(w)$, $\{J(\Hat{w})\}$ is the limiting point of $\{J(w(k))\}$.
By the Armijo rule, we have that $\{ J(w(k)) \}$ is strictly monotone decreasing.
Therefore, assuming $J(w(0)) < \infty$, it follows that $J(w)$ is continuously differentiable in $\{ w(k) \}$ as follows from Conditions {\bf (A1)} and {\bf (A2)}.
Moreover,
\[
J(w(k)) - J(w(k+1)) \to 0,
\]
for $k \to \infty$, which implies
\[
- \sigma \beta^{\kappa} \alpha(k) \nabla_{\delta ( w ( k ) , p )} J(w(k)) \to 0,
\]
since $J$ is bounded below by 0.
Note that $\beta^\kappa \not \to 0$ by convergence of the Armijo rule. 
Moreover, $\alpha(k) \to 0$ can only occur if the steepest feasible descent direction $\delta ( w ( k ) , p ) \to 0$ by construction.
Therefore, $\nabla_{\delta ( w ( k ) , p )} J(w(k)) \to 0$, which implies $\delta ( w ( k ) , p ) \to 0$ and so $\{w(k)\}$ converges to a stationary point.
\end{proof}

If $  \delta ( w ( T ) , p ) = 0$ in \eqref{eq:descentalgo}, then $ w ( T ) $  is a stationary point for the loss minimization problem in \eqref{eq:pf_theta}. It may be the case that $w(T)$ is not a global minimum, i.e., $J(w(T)) > 0$, for example, because ${\cal W}_{\Phi , \phi^*} = \emptyset$, i.e., the feature setting $ \Phi = \phi^*$ is not graphical.
Applying $ {\cal F}_{ (\mathcal{W},p,J) }(w) $ to random initializations, we may test for graphicality; see the following remark.

\begin{remark} \label{rmk:non_graphicality}
{\em Let $w_{l}$ be randomly initialized using some probability density function 
$f_{ {\cal W}}(w)$, 
for which $f_{ {\cal W}} (w) > 0$, for all $ w \in {\cal W} $ and $f_{ {\cal W}}(w) = 0$ for all $w \notin {\cal W}$. 
If $$\min \{ J ( {\cal F}_{ ( \mathcal{W}, p , J )}(w_{1})), \dots, J( {\cal F}_{ ( \mathcal{W},p , J )  }(w_{l}))\} > 0,$$ for $l \xrightarrow[]{} \infty$, then
\begin{align*}
    \mathbb{P} \big (   J( {\cal F}_{ ( \mathcal{W},p , J ) }(w)) > 0 \big  ) \xrightarrow[]{} 1, \ \text{for all} \ w \in {\cal W} , 
\end{align*}
and the problem is not graphical.
}
\end{remark}

We will proceed with the computation of the steepest feasible descent direction for graphs in the bounded and Markov settings considering both the taxicab and Euclidean norms. 
Finding the steepest descent direction in optimization involves solving an optimization problem, as discussed in previous works like \cite{zoutendijk1960methods} and \cite{Rosen_1960}, and more recently summarized by \cite{chen2000methods}.
As we will show, it is simple to determine the steepest feasible descent direction in the taxicab norm if the steepest feasible direction in the Euclidean norm is available. 
It is worth noting that \citet{gafni1984two} already present efficient methods for finding the steepest feasible descent direction in the Euclidean norm, especially when dealing with non-negativity or simplex constraints on control variables. 
In both settings, our result follows from a straightforward extension of their work.

\subsection{Steepest Feasible Descent for Graphs over a Hypercube} \label{sec:gh}

In this subsection, we consider the steepest feasible descent for graphs in $ {\cal W} = {\cal W}^b$. 
To that end, we denote by $u_{ i, j }$ a vector of size $|E|$ with value $1$ at the entry corresponding to $ (i, j ) \in E$, and $0$ otherwise.

\begin{theorem}\label{ref:thmdescent}
Let $w \in {\cal W}^b $, then a steepest feasible descent direction is given
\begin{itemize}
\item[(i)] for $ p=2$, by
$\delta(w,2) = \left(\frac{\Tilde{\delta}(w,2)_{i,j}}{\| \Tilde{\delta}(w,2) \|_2}\right)_{(i,j) \in E}$, where 
\[
 \Tilde{\delta}( w , 2 )_{i,j} =  \begin{cases} 
  0 & (i,j) \in \mathcal{A};\\ 
 -\nabla J(w)_{i,j} & (i,j) \notin \mathcal{A},
\end{cases}
 \]
with $$\mathcal{A} = \{ (i,j) \in E : w_{i,j} = 0 \text{ and } \nabla J(w)_{i,j} > 0 \text{, or } w_{i,j} = b \text{ and } \nabla J(w)_{i,j} < 0  \}.$$

\item[(ii)]
for $ p =1 $, by
\[
 \delta ( w , 1 ) =
 \begin{cases}
     u_{ i, j } & \text{if } \delta ( w , 2 )_{i,j} > 0; \\
     - u_{ i, j } & \text{if } \delta ( w , 2 )_{i,j} < 0; \\
     0 & \text{if } \delta ( w , 2 )_{i,j} = 0, 
 \end{cases}
\]
where 
\[
( i , j ) \in \argmax_{i,j}  \ | \delta ( w , 2 ) |,
\]
and $|\cdot|$ transforms the vector to absolute values. 
In the case that $\argmax_{i,j}  \ | \delta ( w , 2 ) |$ is not a singleton set, we simply choose $(i,j) \in \argmax_{i,j}  \ | \delta ( w , 2 ) |$ randomly.
\end{itemize}

\end{theorem}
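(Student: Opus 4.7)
The plan is to reduce both parts of the theorem to a linear program over the intersection of an $\ell^p$-ball with the tangent cone of $\mathcal{W}^b$ at $w$, following the framework of \citet{gafni1984two}, and then exhibit the stated closed-form minimizers.

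First I would unpack $\bfitDelta(w; p)$. By Fréchet differentiability (condition \textbf{(A1)}), the directional derivative along any $d$ is $\partial_d J(w) = \langle \nabla J(w), d \rangle$. Since $\mathcal{W}^b = [0,b]^{|E|}$ is a box, a vector $d$ with $\|d\|_p \leq 1$ belongs to $\bfitDelta(w;p)$ iff $d_{i,j} \geq 0$ whenever $w_{i,j} = 0$, and $d_{i,j} \leq 0$ whenever $w_{i,j} = b$, with no sign restriction on interior coordinates. Call this sign-restricted polyhedral cone $\mathcal{T}(w)$. Each part then becomes $\min\{\langle \nabla J(w), d \rangle : d \in \mathcal{T}(w),\ \|d\|_p \leq 1\}$.

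For part (i), I would invoke the KKT conditions for this convex program with objective $\langle \nabla J(w), d \rangle$, norm-ball constraint $\|d\|_2^2 \leq 1$, and the finitely many sign constraints encoded in $\mathcal{T}(w)$. The key observation is that for any $(i,j) \in \mathcal{A}$, the sign constraint forces $d_{i,j}$ to have the \emph{same} sign as $\nabla J(w)_{i,j}$ (for instance, if $w_{i,j} = 0$ and $\nabla J(w)_{i,j} > 0$, then $d_{i,j} \geq 0$), so this coordinate contributes non-negatively to the linear objective and is minimized at $d_{i,j} = 0$. For $(i,j) \notin \mathcal{A}$, either the coordinate is interior or the antigradient direction itself satisfies the sign constraint; stationarity of the Lagrangian then yields $d_{i,j} \propto -\nabla J(w)_{i,j}$, with the proportionality constant pinned by $\|d\|_2 = 1$. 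Assembling these cases reproduces $\tilde{\delta}(w,2)$ up to normalization; optimality against any other $d' \in \mathcal{T}(w) \cap B_2$ follows from a direct Cauchy--Schwarz estimate on the complement of $\mathcal{A}$.

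For part (ii), I would exploit that a linear objective on $\{d : \|d\|_1 \leq 1\}$ attains its minimum at a vertex of that polytope, namely a signed unit basis vector $\pm u_{i,j}$. Intersecting with $\mathcal{T}(w)$, the optimization collapses to choosing a coordinate $(i,j) \in E$ and a sign $s \in \{-1,+1\}$ with $s\,u_{i,j} \in \mathcal{T}(w)$ that minimizes $s\,\nabla J(w)_{i,j}$; the minimum is attained at some $(i,j) \notin \mathcal{A}$ maximizing $|\nabla J(w)_{i,j}|$, with sign $s = -\operatorname{sign}(\nabla J(w)_{i,j})$. By part (i), $|\delta(w,2)_{i,j}|$ equals $|\nabla J(w)_{i,j}|/\|\tilde{\delta}(w,2)\|_2$ off $\mathcal{A}$ and vanishes on $\mathcal{A}$, so the theorem's argmax selects exactly this optimal coordinate, and the case-split on the sign of $\delta(w,2)_{i,j}$ reproduces $s\,u_{i,j}$. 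The main obstacle is bookkeeping the boundary-but-not-active indices (where $w_{i,j} \in \{0,b\}$ but $\nabla J(w)_{i,j} = 0$) and the stationary case $\tilde{\delta}(w,2) = 0$: both parts must return the zero direction to trigger termination in Theorem~\ref{thm:convergence}, and one must verify that the tie-breaking convention in the argmax of (ii) is consistent with the KKT derivation of (i). Once these cases are handled, the remainder of the argument is essentially algebraic.
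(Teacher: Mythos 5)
Your proposal is correct and follows essentially the same route as the paper's proof: for $p=2$ it zeroes out the active set $\mathcal{A}$ and takes the normalized negative gradient on the remaining coordinates (the paper phrases this as a projection onto the free subspace, you phrase it via KKT on the tangent cone, both in the spirit of \citet{gafni1984two}), and for $p=1$ it uses the vertex/corner-solution argument for a linear objective over the $\ell^1$-ball intersected with the feasible cone. Your explicit attention to the boundary-but-inactive indices and the stationary case $\tilde{\delta}(w,2)=0$ is slightly more careful than the paper's sketch, but it does not change the argument.
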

\begin{proof}
Let $p=2$.
Assuming $\mathcal{A} = \emptyset$, clearly $\delta(w,2)$ points in the opposite direction of $\nabla J(w)$ and thereby is simply the normalized negative gradient.
Now assume that $\mathcal{A}$ is not empty.
Finding the steepest feasible descent direction involves a projection of $-\nabla J(w)$ on the subspace $\{d \in \mathbb{R}^{|E|} : d_{i,j} = 0 \text{ for all } (i,j) \in \mathcal{A}\}$, where $$\mathcal{A} = \{ (i,j) \in E : w_{i,j} = 0 \text{ and } \nabla J(w)_{i,j} > 0 \text{, or } w_{i,j} = b \text{ and } \nabla J(w)_{i,j} < 0  \},$$
which is the set of elements that would violate the hypercube constraints $0 \leq w_{i,j}$ or $w_{i,j} \leq b$ in the direction of $-\nabla J(w)$.
This projection is achieved by setting the direction to $0$ for all elements in $\mathcal{A}$.
By projection, we preserve the optimality of the direction in its directional derivative, see \eqref{eq:dd}.
Normalization of the direction $\tilde{\delta}( w , 2 )$ to unit length, i.e., $\delta( w , 2 ) = \tilde{\delta}( w , 2 ) / \| \tilde{\delta}( w , 2 )\|_2$, also does not affect optimality in the directional derivative, so it follows that $\delta( w , 2 )$ is the steepest feasible descent direction.

We proceed by considering $p=1$, where we leverage the obtained $\delta( w , 2 )$.
Given the constraint $\| \delta( w , 1 ) \|_1 \leq 1$, we know that a corner solution minimizes $\partial_d J(w) = d \cdot \nabla J(w)$ by the linearity of the problem.
Thus, we can simply choose the element $(i,j)$ of $\delta( w , 2 )$ with the steepest potential decrease in the loss function $J(w)$ and let $\delta( w , 1 ) = u_{ i,j} $ if $\delta(w,2)_{i,j} > 0$ and $\delta( w , 1 ) = -u_{ i,j} $ if $\delta(w,2)_{i,j} < 0$.
Finally, when $\delta(w,2)_{i,j} = 0$, we simply let $\delta( w , 1 ) =0$.
Therefore, it follows that $\| \delta( w , 1 ) \|_1 \leq 1$ is satisfied.
Note that the direction in $(i,j)$ is feasible by the construction of $\delta( w , 2 )$.
\end{proof}

\vspace{0.25cm}

In slight abuse of notation, we refer to the steepest feasible descent in (i) 
of Theorem~\ref{ref:thmdescent}
in a subsuming way {\em $ L^2$-descent}, and the steepest descent in (ii) 
of Theorem~\ref{ref:thmdescent}
as {\em $L^1$-descent}.

As Theorem \ref{ref:thmdescent} shows, the choice of $ p$ impacts the steepest descent direction and thus leads to different implicit regularizations.
Moreover, choosing $ p=1$ leads to a steepest descent that greedily chooses the best link weight to adjust, thereby having a much more parsimonious impact on the number of links that are adjusted than the standard steepest descent with $ p =2 $ that potentially changes all links simultaneously.

In the literature, regularization is typically done explicitly by adding a regularizing term to the objective function.
More specifically, the minimization of $ J ( w ) $ is replaced by the minimization of the form in \eqref{eq:pf_theta_reg_exp}.
As Example \ref{ex:reg_explicit} shows, such explicit regularization renders network construction with hard constraints carried out by the FBNC algorithm infeasible.

\begin{example} \label{ex:reg_explicit}
{\em Let us consider the set of networks with node set $V = \{1,2\}$, 
link set $E = \{(1,2), (2,1)\}$, and link weights $w = ( w_{1,2} ,w_{2,1} )  $ with $ w \in {\cal W}^b = [ 0, 1 ]^2 $.
Consider the feature
\[
\Phi(w ) =  w_{1,2} + 2 w_{2,1} ,
\]
and suppose we want to construct graphs such that $ \Phi ( w ) = 1$.
This leads to the loss function
\[
J ( w ) = \left\| w_{1,2} + 2 w_{2,1} - 1 \right\|_2^2 .
\]
For explicit regularization we add the regularizer $ \beta ( \|w - w(0)\|_p \big )^p$ to $ J ( w) $
for some $ \beta > 0$. We choose $\beta = 1$ and consider the extended loss function
\[
\hat J_p ( w ) = J( w ) + \|w - w(0)\|_p^p,
\]
for $ p = 1 ,2 $.

In Figure~\ref{fig:er2}-\ref{fig:ir1}, we visualize the paths of ${\cal F}_{ ( [ 0 , 1]^2,p , J ; \alpha , T , \gamma ) } ( w ) $ for $w \in {\cal W}^b$ for various choices for the loss function and norm, and choose $ T=10000 $, $ \alpha=10^{-3}$, and $\gamma=10^{-3} $.
More specifically, by choosing an initial point $ w $ in $ [ 0 , 1]^2$ and following the green lines (indicating the steepest descent direction), we can trace the path the algorithm will follow.
The red line denotes the target set.

Figure~\ref{fig:er2} shows the trajectories of $ {\cal F}_{ ([ 0 , 1]^2, 2 , \hat J_2  )} ( w ) $, i.e., algorithm (\ref{eq:descentalgo}) applied to 
$ \hat J_2(w) $.
Figure~\ref{fig:er2} shows that the explicit regularization approach prevents the algorithm from reaching the target set.
This is opposed to Figure~\ref{fig:ir2}, where we apply $ {\cal F}_{ ([ 0 , 1]^2, 2 , J ) } (w ) $ which uses implicit regularization via the $ L^2$-norm and finds the target set while minimizing the distance of $\{w(k)\}$ to $w(0)$.

Figure~\ref{fig:er1} and Figure~\ref{fig:ir1} show the same effect when instead of $ \|w - w(0)\|^2_2$ the factor $  \| w - w(0)\|^1_1$ is used for explicit regularization in the $L^1$-norm.
\begin{figure}
\centering
\begin{minipage}{.48\textwidth}
  \centering
  \includegraphics[width=\linewidth]{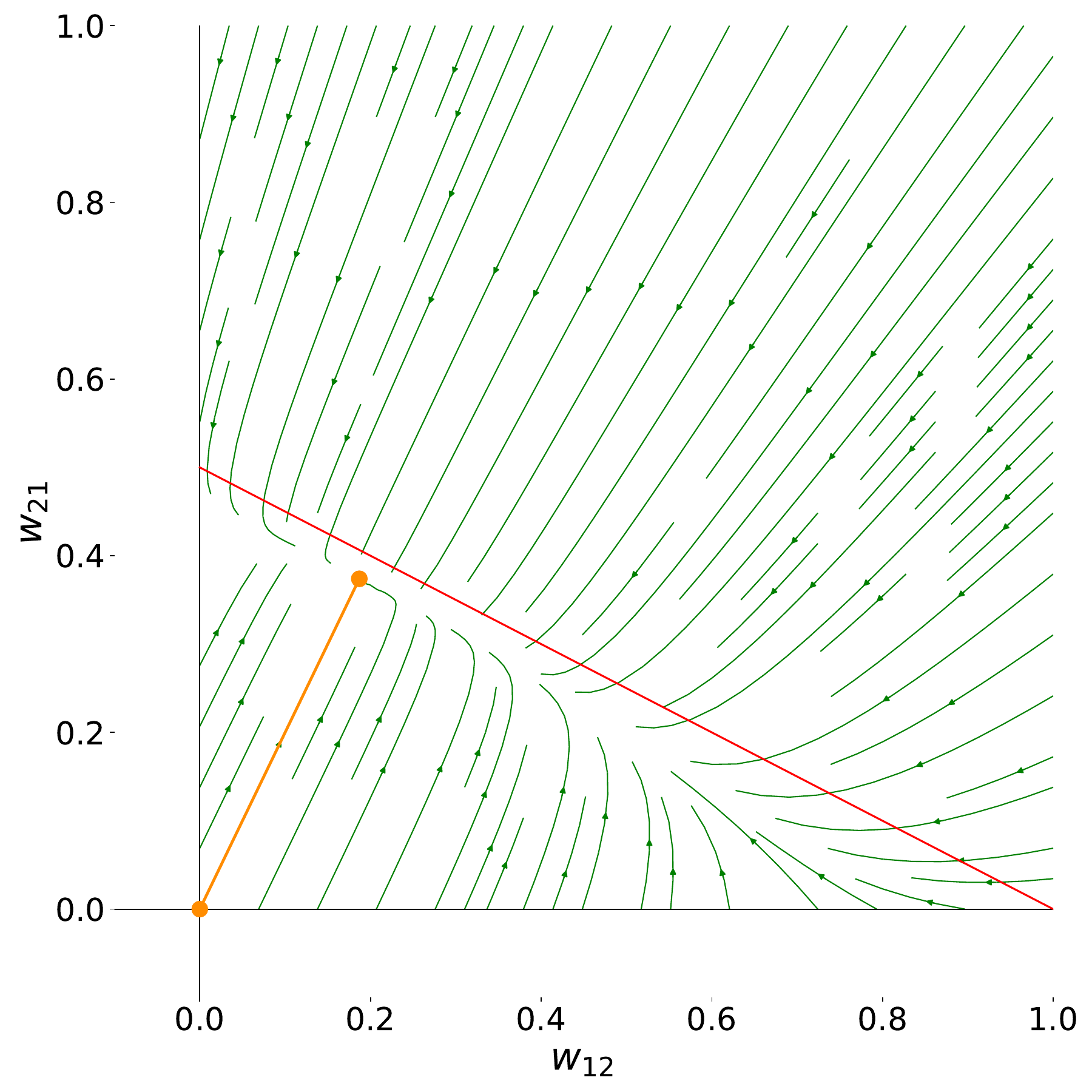}
  \caption{Paths of algorithm (\ref{eq:descentalgo}) for the explicitly regularized loss function $\hat J_2 (w) $ using the $ L^2$-descent, i.e., paths of $ {\cal F}_{ ([0,1]^2 ,2,  \hat J_2  )} ( w) $. It follows that $ {\cal F}_{ ([0,1]^2, 2 , \hat J_2 )} ( 0,0 ) =  (0.187, 0.374)$, denoted by the orange dot. } 
  \label{fig:er2}
\end{minipage}%
\hspace{0.5cm}
\begin{minipage}{.48\textwidth}
  \centering
  \includegraphics[width=\linewidth]{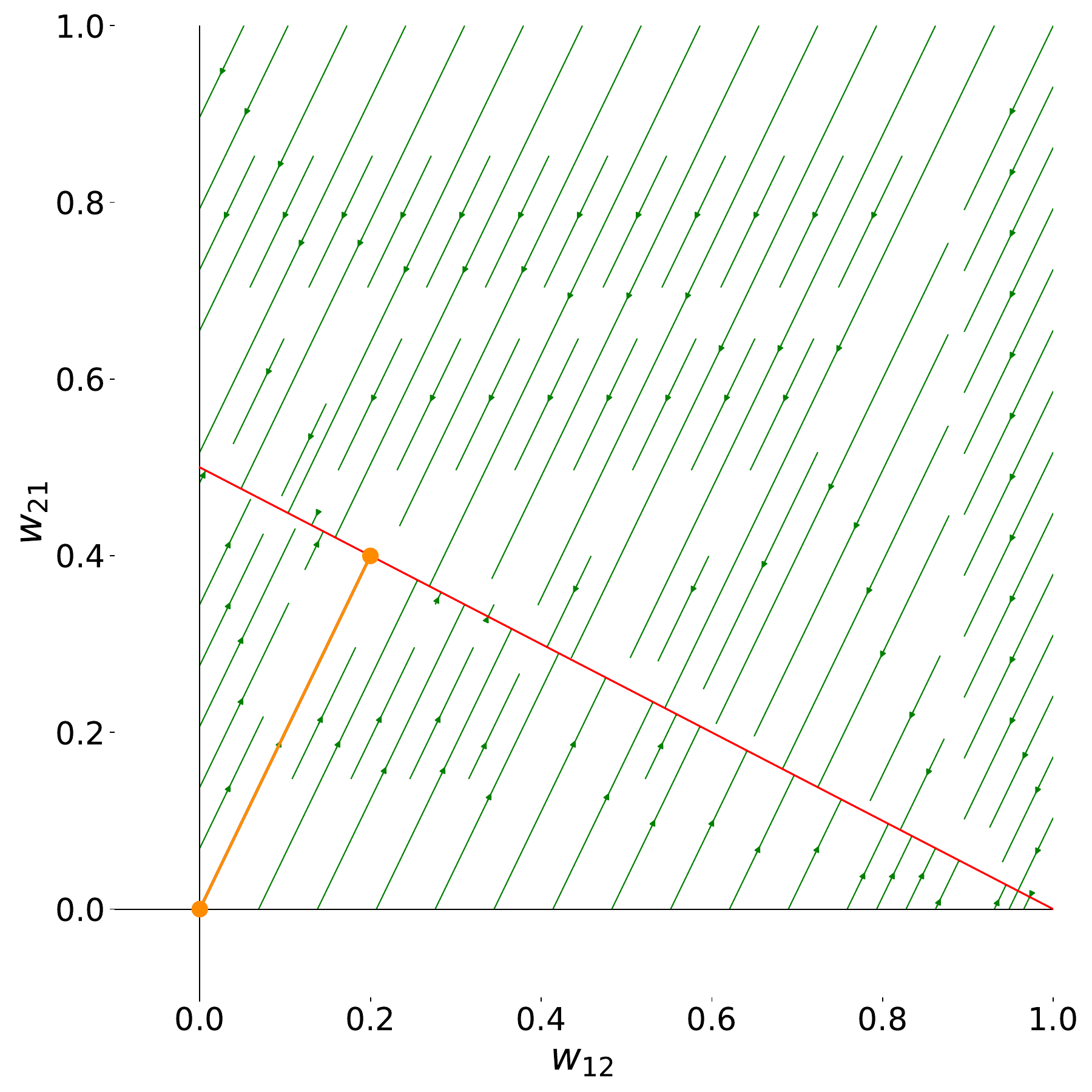}
    \caption{Paths of algorithm (\ref{eq:descentalgo}) for basic loss function $J(w) $ using implicit regularization via the $ L^2$-descent, i.e., paths of $ {\cal F}_{ ( [0,1]^2, 2 , J  ) } ( w ) $. It follows that $ {\cal F}_{ ([0,1]^2, 2 , J_2 )} ( 0,0 ) =  (\frac{1}{5}, \frac{2}{5})$, denoted by the orange dot.}
        \label{fig:ir2}
\end{minipage}%
\vspace{0.1cm}
\begin{minipage}{.48\textwidth}
  \centering
  \includegraphics[width=\linewidth]{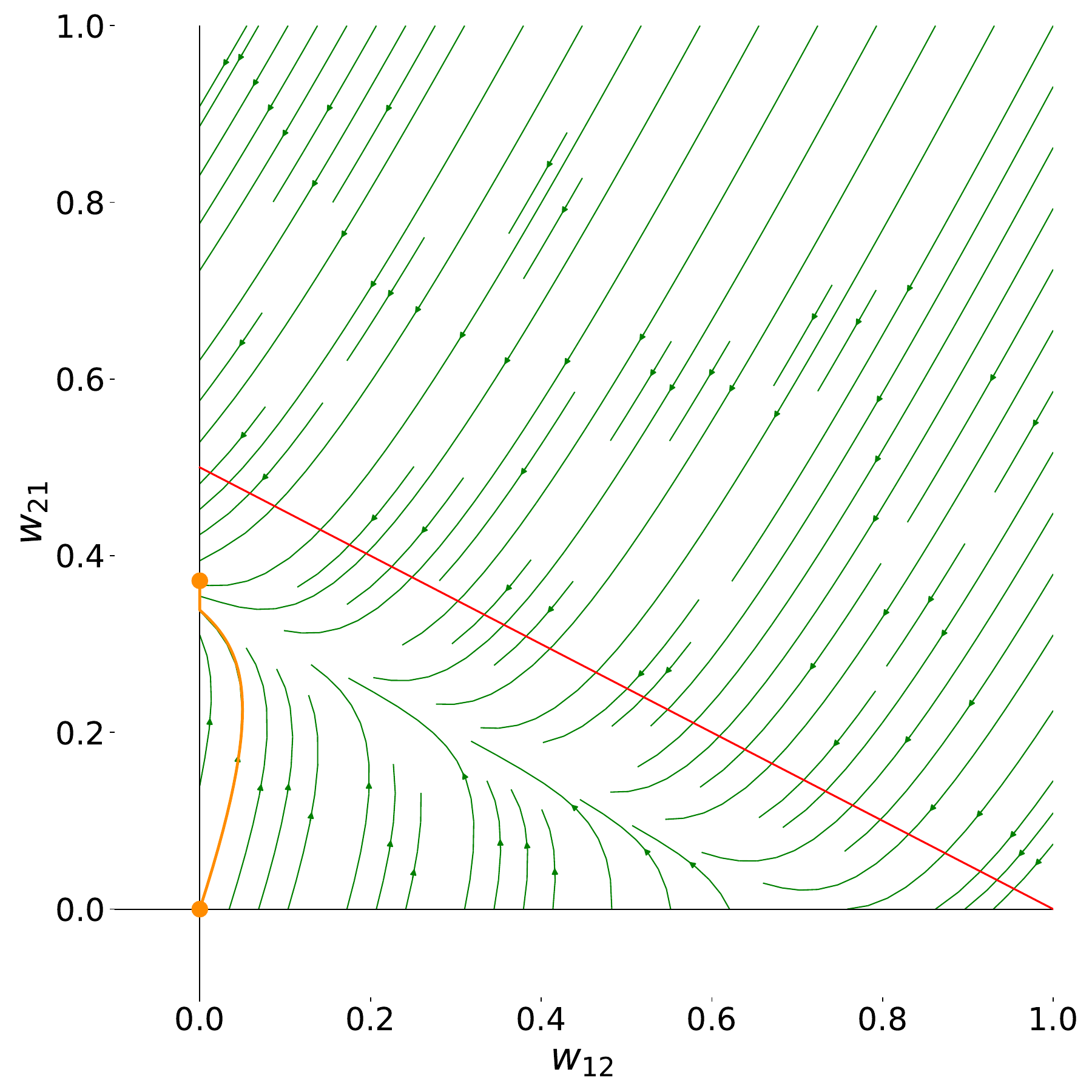}
    \caption{Paths of algorithm (\ref{eq:descentalgo}) for the explicitly regularized loss function  $\hat J_1(w) $ using the $ L^2 $-descent, i.e., paths of $ {\cal F}_{ ( [0,1]^2, 2 ,\hat J_1 ) } ( w ) $.  It follows that $ {\cal F}_{( [0,1]^2, 2 , \hat J_1 )} ( 0,0 ) =  (0, \frac{3}{8})$, denoted by the orange dot.} 
        \label{fig:er1}
\end{minipage}%
\hspace{0.5cm}
\begin{minipage}{.48\textwidth}
  \centering
  \includegraphics[width=\linewidth]{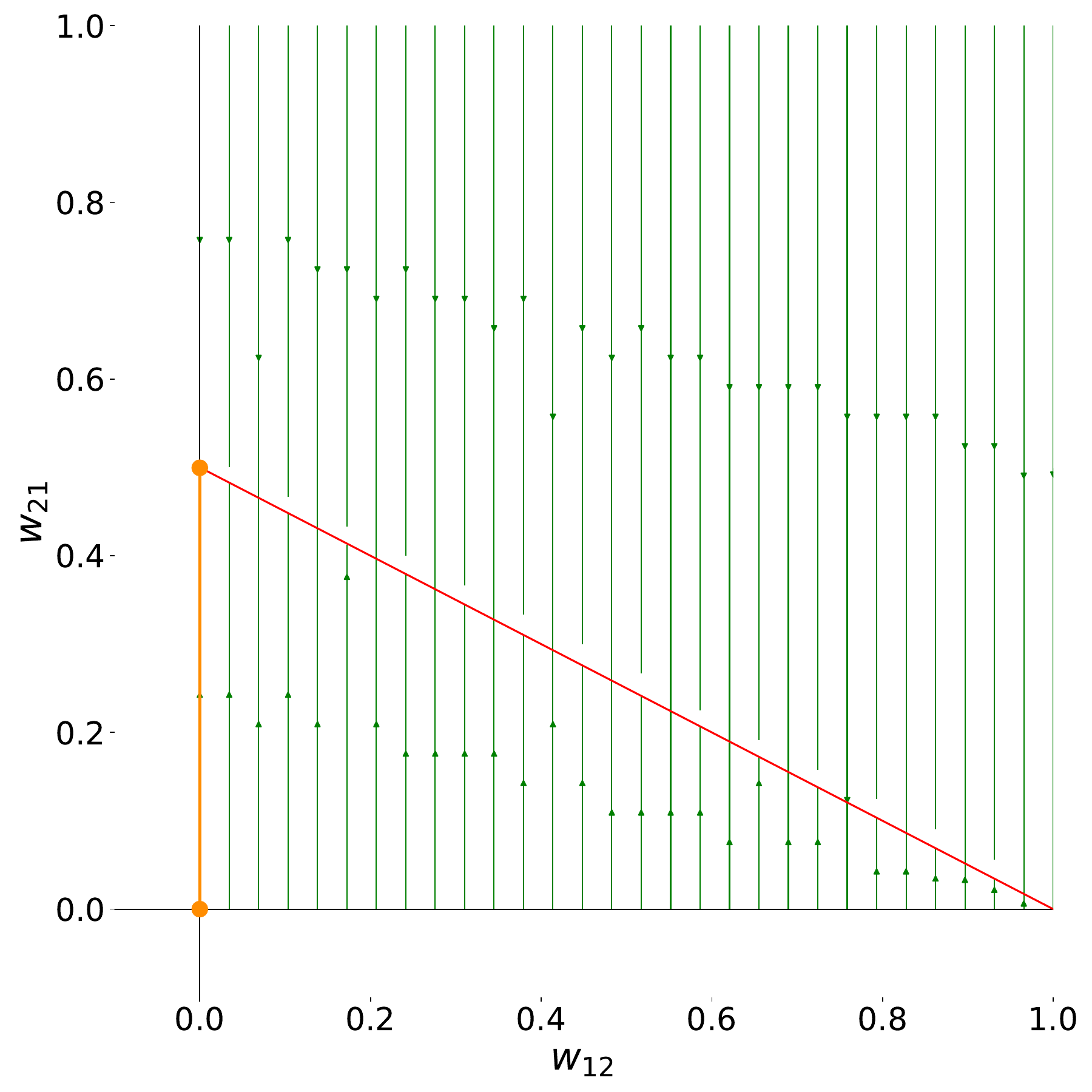}
    \caption{Paths of algorithm (\ref{eq:descentalgo}) for the basic loss function $ J(w) $ using implicit regularization via $ L^1$-descent, i.e., paths of $ {\cal F}_{ ([0,1]^2, 1 , J ) } ( w ) $.  It follows that $ {\cal F}_{ ( [0,1]^2, 1 , J_1 )} ( 0,0 ) =  (0, \frac{1}{2})$, denoted by the orange dot.} 
    \label{fig:ir1}
\end{minipage}%
\end{figure}
In both settings, explicit regularization does prevent the loss function $ J ( w )$ from becoming zero.

An important observation is that the choice of $ p $ has an impact on the trajectories of the algorithm.
Also, we see that in this case the $L^2$-descent in Figure~\ref{fig:ir2} adjusts both components of $ w ( k)$, whereas the $ L^1 $-descent, only adjusts one component.
This result of implicit regularization via the $L^1$-descent can be favorable in fitting graphs to features. We refer the reader to Section~\ref{subsec:social_network_example} for a more elaborate numerical example.}

\end{example}

Typically in feasible direction methods, one aims to find step sizes using line search \citep{zoutendijk1960methods, Rosen_1960}.
Long step sizes may benefit the convergence speed of the algorithm; however, it can harm the implicit regularization of the solution.
This is illustrated in Example \ref{ex:long_short_step}.

\begin{example} \label{ex:long_short_step}
{\em Let us reconsider the set of networks from Example \ref{ex:reg_explicit} with node set $V = \{1,2\}$, 
link set $E = \{(1,2), (2,1)\}$, and link weights $w = ( w_{1,2} ,w_{2,1} )  $ with $ w \in {\cal W}^b = [ 0, 1 ]^2 $.
Now, consider the feature function
\[
\Phi(w ) =  \left( w_{1,2} - 1\right)^2  - w_{2,1} ,
\]
and suppose we want to construct graphs such that $ \Phi ( w ) = 0$.
This leads to the loss function
\[
J ( w ) = \left\| \left( w_{1,2} - 1\right)^2 - w_{2,1} \right\|_2^2.
\]

In Figures \ref{fig:L2_short_vs_long} and \ref{fig:L1_short_vs_long}, we visualize the target set by the red line.
The orange path starting in $w(0) = (0, 0)$ denotes the sequence $w(k)$ induced by the algorithm in (\ref{eq:descentalgo}) using maximal step size $\alpha = 0.001$, i.e., we visualize the output trace of $ {\cal F}_{ ( [0,1]^2, p , J ) } := {\cal F}_{ ( [0,1]^2 ,p,  J ; \alpha , T , \gamma ) } ( w (0)) $, for $ T = 10000 $, where $p=2$ in Figure~\ref{fig:L2_short_vs_long} and $p=1$ in Figure~\ref{fig:L1_short_vs_long}.
Moreover, we visualize the blue path that denotes the sequence $\{\tilde{w}(k)\}$ starting in $\tilde{w}(0) = w(0) = (0, 0)$ induced by line search algorithm Cauchy's steepest descent:
\begin{align}
\begin{array}{rll} \label{eq:cauchy}
     \tilde{w}(k+1) =& \tilde{w}(k) + \alpha(k) \delta(\tilde{w},p)  & k=0,1,2,\dots, T-1\\
     \alpha(k) =& \argmin_{\alpha > 0} J(\tilde{w}(k) + \alpha \delta(\tilde{w}(k),p).
\end{array}
\end{align}

We can observe that the blue path with long step size(s) finds a solution on the line spanned by the initial steepest feasible descent direction (in fact, it is the only iteration of this algorithm).
In contrast, the red path follows a short step size and therefore ensures that we minimize the loss function using a small graph transformation at every step.
Consequently, we find that the distance $\| w(T) - w(0) \|_p$ is smaller than the distance $\| \Tilde{w}(T) - w(0) \|_p$, for both $p \in \{1,2\}$, as can be seen from the black lines that denote all $w$ for which $\|w - w(0)\|_p$.
Hence, the short step size solution is closer to $w(0)$ than the long step size solution.
Note that this small example shows that implicit regularization is not guaranteed to adjust a minimal number of links. 
However, in general, it does lead to a strong reduction in the number of links adjusted, see Example~\ref{ex:reg_explicit} and the numerical example in \ref{subsec:social_network_example}.
}

\begin{figure}
\centering
\begin{minipage}{.48\textwidth}
  \centering
  \includegraphics[width=\linewidth]{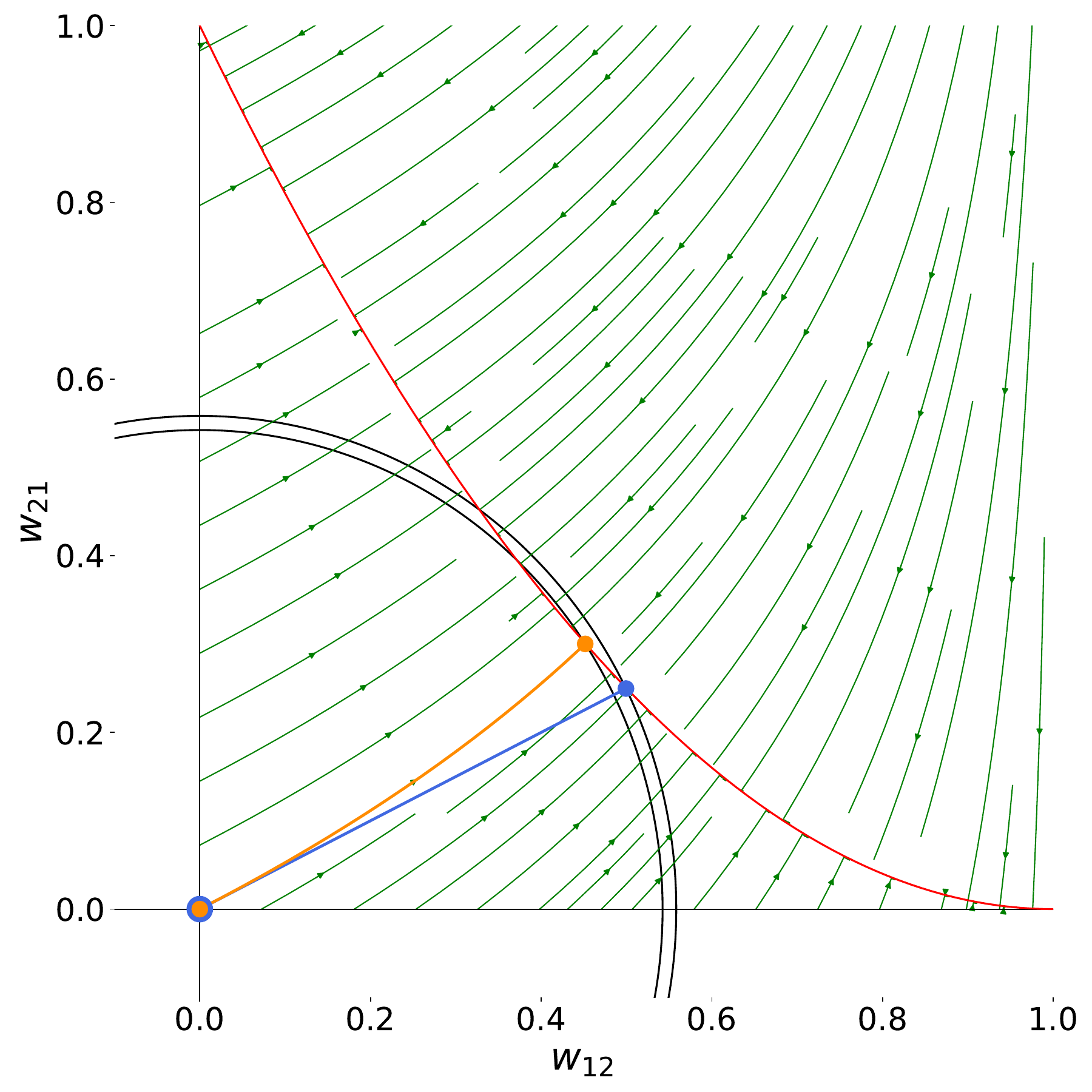}
  \caption{Paths of steepest feasible descent algorithm (\ref{eq:descentalgo}) with short step size (orange) and long step size algorithm \eqref{eq:cauchy} (blue) in the $L^2$-normed vector space $\mathcal{V}_2$. } 
  \label{fig:L2_short_vs_long}
\end{minipage}%
\hspace{0.5cm}
\begin{minipage}{.48\textwidth}
  \centering
  \includegraphics[width=\linewidth]{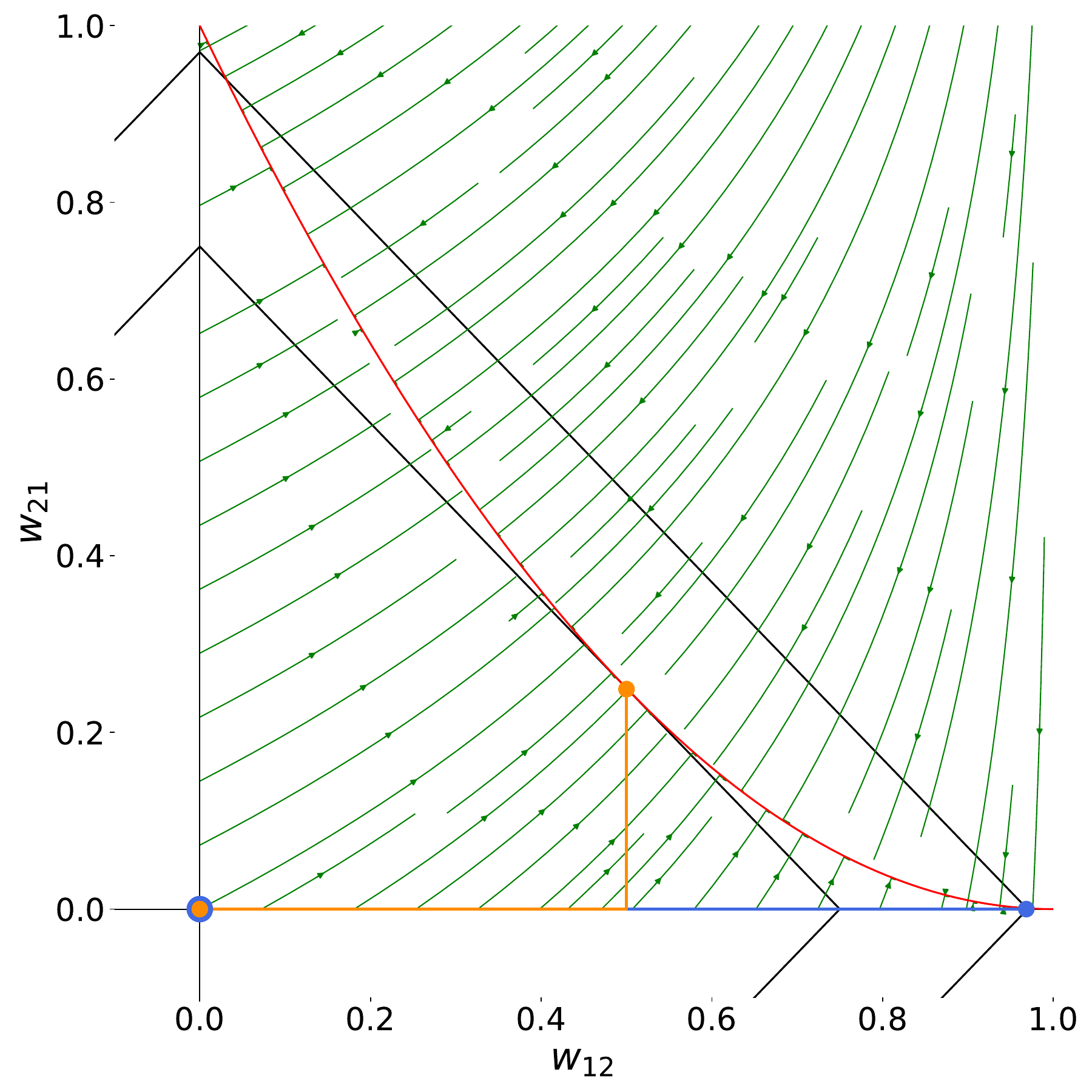}
    \caption{Paths of steepest feasible descent algorithm (\ref{eq:descentalgo}) with short step size (orange) and long step size algorithm \eqref{eq:cauchy} (blue) in the $L^1$-normed vector space $\mathcal{V}_1$.}
    \label{fig:L1_short_vs_long}
\end{minipage}%
\end{figure}
\end{example}

We conclude this section with Remark \ref{rem:gs_hc}, showing a simpler expression for the step size in the case where we consider the $L^1$-descent direction.

\begin{remark} \label{rem:gs_hc}
    {\em Let $\delta(w(k),1)$ be the steepest feasible descent in the $k$th iteration in the bounded setting.  
    The step size can then be calculated as
    \[
     \alpha(k) = \begin{cases}
         \min \big(\alpha, w(k)_{i,j} \big) & \text{if } \delta(w(k),1)_{i,j} = -1; \\
         \min \left(\alpha, \frac{b - w(k)_{i,j}}{\delta(w(k),1)_{i,j}}\right) & \text{if } \delta(w(k),1)_{i,j} = 1 ; \\
         0 & \text{if } \delta(w(k),1)_{i,j} = 0,
     \end{cases}
    \]
    which follows from the fact that there exists at most a single nonnegative element $\delta(w,1)_{i,j} \in \{-1, 1\}$.
    }
\end{remark}

\subsection{Steepest Feasible Descent for Markovian Graphs}\label{sec:MG}

In this section, we present our result for finding the steepest feasible descent direction in the Markov setting. 
In fact, the result is quite similar to the bounded setting.
Recall for the following that we denote by $u_{ i, j }$ a vector of size $|E|$ with value $1$ at the entry corresponding to $ (i, j ) \in E$, and $0$ otherwise.

\begin{theorem}\label{ref:thmdescentMarkov}
Let $w \in {\cal W}^M $, then a steepest feasible descent direction is given
\begin{itemize} 
\item[(i)] for $ p=2$, by $\delta(w,2) = \left(\frac{\Tilde{\delta}(w,2)_{i,j}}{\| \Tilde{\delta}(w,2) \|_2}\right)_{(i,j) \in E}$, where  
\[
\Tilde{\delta}( w , 2 )_{i,j} =
\begin{cases}
 \max \left\{ \lambda_i - \nabla J(w)_{ i,j }, 0 \right\} &(i,j) \in \mathcal{A}; \\
 \lambda_i - \nabla J(w)_{ i,j } & (i,j) \in E \setminus \mathcal{A},
\end{cases}
\]
where $\lambda_i$ solves the equation
\begin{align} \label{eq:piece_lin}
\sum_{(i,j) \in \mathcal{A}} \max \left\{ \lambda_i - \nabla J(w)_{ i,j }, 0 \right\} + \sum_{(i,j) \in E \setminus \mathcal{A}} \left[ \lambda_i - \nabla J(w)_{ i,j } \right]= 0,
\end{align}
for all $i \in V$, and
$$\mathcal{A} = \{ (i,j) \in E : w_{i,j} = 0 \}.$$
The normalization ensures that $\| \delta(w,2) \|_2 = 1$ if $\delta(w,2) \neq 0$.

\item[(ii)]
for $ p =1 $, by 
\[
\delta( w , 1 ) = \frac{1}{2} \big (  u_{ i,k }  - u_{ i,j } \big ) , 
\]
where 
\[
( i,j,k ) \in  \argmax_{i,j,k} \  \left( \delta( w , 2 )_{i,k} - \delta( w , 2 )_{i,j} \right) . 
\]
In the case that $\argmax_{i,j,k} \  \left( \delta( w , 2 )_{i,k} - \delta( w , 2 )_{i,j} \right)$ is not a singleton set, we simply choose $(i,j,k) \in \argmax_{i,j,k} \  \left( \delta( w , 2 )_{i,k} - \delta( w , 2 )_{i,j} \right)$ randomly.
\end{itemize}

\end{theorem}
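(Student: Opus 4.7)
The plan is to handle parts (i) and (ii) separately, with (ii) leveraging the explicit description of $\delta(w,2)$ obtained in (i). In both parts the starting observation is that at $w \in \mathcal{W}^M$ a direction $d \in \mathbb{R}^{|E|}$ belongs to $\bfitDelta(w;p)$ if and only if (a) $\sum_{j \in V_i} d_{i,j} = 0$ for every $i \in V$ (to preserve row-stochasticity), (b) $d_{i,j} \geq 0$ for every $(i,j) \in \mathcal{A}$ (to preserve non-negativity at the boundary), and (c) $\|d\|_p \leq 1$. Minimizing the linear form $d^{\top}\nabla J(w)$ over this set then amounts to a constrained program that decouples across rows $i \in V$, since all constraints except $\|d\|_p \leq 1$ act per-row and the objective is additive in rows.

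For (i), the plan is to attack the resulting convex program via Lagrangian duality. Introducing $\lambda_i$ for each row-sum equality, $\mu_{i,j} \geq 0$ for each boundary inequality, and $\nu > 0$ for $\|d\|_2^2 \leq 1$, the stationarity condition reads
\[
2\nu\, d_{i,j} \;=\; \lambda_i - \nabla J(w)_{i,j} + \mu_{i,j}\,\mathbbm{1}_{\{(i,j)\in\mathcal{A}\}},
\]
and combining with complementary slackness $\mu_{i,j} d_{i,j} = 0$ yields, up to the positive scalar $2\nu$, the expression $d_{i,j} \propto \lambda_i - \nabla J(w)_{i,j}$ on $E \setminus \mathcal{A}$ and $d_{i,j} \propto \max\{\lambda_i - \nabla J(w)_{i,j}, 0\}$ on $\mathcal{A}$. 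Substituting this into the row-sum constraint produces equation \eqref{eq:piece_lin}; its left-hand side is continuous, piecewise-linear and strictly increasing in $\lambda_i$ (the non-$\mathcal{A}$ terms contribute slope $|V_i \setminus \mathcal{A}| \geq 1$, since $w$ being stochastic forces at least one strictly positive entry per row), so a unique solution $\lambda_i$ exists. Normalization by $\|\tilde{\delta}(w,2)\|_2$ enforces $\|\delta(w,2)\|_2 = 1$ and, by homogeneity of the directional derivative in $d$, preserves optimality.

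For (ii), the linearity of $d \mapsto d^{\top}\nabla J(w)$ together with the polytope structure defined by (a)–(c) implies that the minimum is attained at an extreme point. A short polytope argument then shows that such extreme points concentrate all $\|\cdot\|_1$-mass into two coordinates of a single row: $d_{i,k} = 1/2$ and $d_{i,j} = -1/2$ with $k \neq j$, where the non-negativity constraint on $\mathcal{A}$ forces $(i,j) \notin \mathcal{A}$ while $k$ remains unrestricted. The objective value is then $\tfrac{1}{2}(\nabla J(w)_{i,k} - \nabla J(w)_{i,j})$, minimized by maximizing $\nabla J(w)_{i,j} - \nabla J(w)_{i,k}$ across all such triples.

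The main obstacle, and the crux of the proof, is the reduction from this combinatorial optimum to the surrogate $\argmax_{i,j,k}(\delta(w,2)_{i,k} - \delta(w,2)_{i,j})$ stated in the theorem. My plan is to close it row by row using the explicit formula from (i). Within row $i$, the map $k \mapsto \tilde{\delta}(w,2)_{i,k}$ is weakly decreasing in $\nabla J(w)_{i,k}$, so the rowwise argmax picks an entry of smallest $\nabla J(w)_{i,\cdot}$; a short case split on whether that smallest entry is active or inactive shows that the clipping at $0$ on $\mathcal{A}$ can never invert this ranking. Symmetrically, since $\tilde{\delta}(w,2)_{i,\cdot} \geq 0$ on $\mathcal{A}$ and $\sum_{j \in V_i} \tilde{\delta}(w,2)_{i,j} = 0$, every strictly negative entry of $\tilde{\delta}(w,2)_{i,\cdot}$ must lie in $E \setminus \mathcal{A}$; hence the rowwise argmin selects the largest $\nabla J(w)_{i,j}$ among $(i,j) \notin \mathcal{A}$, which is precisely the feasibility requirement exposed in (ii). Since $\delta(w,2)$ and $\tilde{\delta}(w,2)$ differ only by a positive scalar, taking the argmax of the difference across rows recovers the correct triple $(i^{\ast},j^{\ast},k^{\ast})$, with the random tie-breaking clause accounting for non-uniqueness exactly as in (i).
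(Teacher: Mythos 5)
Your proposal is correct, and it is substantially more self-contained than the paper's own proof, which for part (i) simply defers to \citet{gafni1984two} for the per-row simplex projection and normalization, and for part (ii) gives the same corner-solution/linearity argument you give but without justifying the passage from the true objective (differences of $\nabla J(w)$ entries) to the surrogate $\argmax_{i,j,k}\left(\delta(w,2)_{i,k}-\delta(w,2)_{i,j}\right)$ appearing in the statement. Your KKT derivation of (i) recovers exactly the cited projection formula, and your identification of the reduction in (ii) as the crux is well placed: the key fact is that within a row the offsets $\lambda_i$ cancel in the difference $\tilde{\delta}(w,2)_{i,k}-\tilde{\delta}(w,2)_{i,j}$, so the row-wise surrogate value is proportional (by the row-independent factor $1/\|\tilde{\delta}(w,2)\|_2$) to $\max_{j:(i,j)\notin\mathcal{A}}\nabla J(w)_{i,j}-\min_k \nabla J(w)_{i,k}$, which makes the cross-row comparison legitimate. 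One caveat on your phrasing: the map $k\mapsto\tilde{\delta}(w,2)_{i,k}$ is \emph{not} weakly decreasing in $\nabla J(w)_{i,k}$ across the active/inactive split --- a clipped entry $(i,k)\in\mathcal{A}$ with large gradient has $\tilde{\delta}(w,2)_{i,k}=0$, which can exceed the value $\lambda_i-\nabla J(w)_{i,k'}<0$ of an inactive entry with much smaller gradient. What saves the argument is precisely the case split you flag: if the row maximum of $\tilde{\delta}(w,2)_{i,\cdot}$ were attained at a clipped entry (value $0$), then the zero row-sum together with $\tilde{\delta}(w,2)_{i,\cdot}\geq 0$ on $\mathcal{A}$ forces the entire row to vanish, so that row contributes no descent anyway; otherwise the maximizer is unclipped and coincides with the gradient argmin. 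Make that step explicit when you write it up, since it is the only place where the surrogate could in principle fail.
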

\begin{proof}
For $p=2$, we find the steepest feasible descent on the simplex for $w_{(i,\cdot)}$, for $i\in V$, and subsequently normalize it to satisfy $\| \delta(w,2) \|_2 = 1$, see \citet{gafni1984two}. 
Note that \eqref{eq:piece_lin} is piecewise linear and therefore efficiently solved.

For $p=1$, the proof is very similar to the proof in Theorem~\ref{ref:thmdescent}.
We again leverage the obtained $\delta( w , 2 )$.
Given the constraint $\| \delta( w , 1 ) \|_1 \leq 1$ and $\sum_{j} \delta( w , 1 )_{i,j} = 0$, for all $i \in V$, we know that a corner solution minimizes $\partial_d J(w) = d \cdot \nabla J(w)$ by the linearity of the problem.
In words, we must find a row $i$, in which a weight exchange between two links maximally decreases the loss function.
More formally, choose the element $(i,k)$ of $\delta( w , 2 )$ with the steepest decrease in loss function $J$, and similarly choose the element $(i,j)$ of $\delta( w , 2 )$ with the steepest increase and let $\delta( w , 1 ) = \frac{1}{2} \big (  u_{ i,k }  - u_{ i,j } \big )$
Note that satisfies $\| \delta( w , 1 ) \|_1 \leq 1$ and the directions in $(i,k)$ and $(i,j)$ are both feasible directions by construction of $\delta( w , 2 )$.
\end{proof}

Finally, we obtain a more specific expression for $\alpha(k)$ if we consider the $L^1$-descent direction, see Remark~\ref{rem:gs_mg}.

\begin{remark} \label{rem:gs_mg}
    {\em Let $\delta(w(k),1)$ be the steepest feasible descent in the $k$th iteration in the bounded setting.  
    The step size can then be calculated as
    \[
     \alpha(k) = \min \{\alpha, 2 w(k)_{i,j}\},
    \]
    which follows from the fact that there is at most one negative element $(i,j)$ for which $\delta(w,1)_{i,j} = -\frac{1}{2}$.
    }
\end{remark}

\subsection{Steepest Feasible Descent through Projection} \label{sec:proj_sfd}

An alternative approach to find $\delta(w,p)$, for $p=2$ specifically, is to utilize a projection on the convex feasible region $\mathcal{W} \in \{\mathcal{W}^b, \mathcal{W}^M\}$.
More formally, let us define
\begin{align}
    \Pi_\mathcal{W}(w) := \argmin_{w' \in \mathcal{W}} \| w - w' \|_2,
\end{align}
Now one can show that in $w$, such that $\delta(w,2) \neq 0$, it holds that for $\alpha \to 0$,
\begin{align}
 \delta^{\alpha}(w,2) := \frac{\Pi_{\mathcal{W}}(w - \alpha \nabla J(w)) -w }{\| \Pi_{\mathcal{W}}(w - \alpha \nabla J(w)) -w \|_2} \to \delta(w,2),
\end{align}
see \citet{zhang1995stability,benveniste2012adaptive}. 
In case $\mathcal{W} = \mathcal{W}^b$, $\Pi_{\mathcal{W}^b}(\cdot)$ is the straightforward projection on the hypercube and in case $\mathcal{W} = \mathcal{W}^M$, $\Pi_{\mathcal{W}^M}(\cdot)$ can be constructed using a projection on the probability simplex of $w(i,\cdot)$, see, e.g., \citet{condat2016fast, perez2020filtered}, for all $i \in V$.

We can use an approximation of $\delta(w,2)$ in a descent algorithm similar to \eqref{eq:descentalgo}.
Let $\Bar{w}(k) = \Pi_{\mathcal{W}}(w(k) - \alpha \nabla J(w(k)))$ and so $(\Bar{w}(k) - w(k))$ is a descent direction (note this direction does not necessarily have unit length).
Now, we can apply the following algorithm:
\begin{align}\label{eq:descentalgo_eta}
   w (k+1 ) &= w ( k ) + \alpha(k) (\Bar{w}(k) - w(k)), \qquad \forall k = 0, 1, 2, \dots, T-1 ,
\end{align}
where $\alpha(k) = \beta^\kappa \alpha$ and $\alpha> 0$ small for regularization purposes, such that $\kappa$ is the smallest non-negative integer satisfying determined following the Armijo rule
\begin{align}
J(w(k)) - J(w(k) + \beta^{\kappa} (\Bar{w}(k) - w(k)) \geq - \sigma \beta^{\kappa} \nabla_{(\Bar{w}(k) - w(k))} J(w(k)).
\end{align}
Convergence can be shown by a verbatim repetition of the proof in Theorem~\ref{thm:convergence} \citep{bertsekas1997nonlinear}.

Some features allow for direct enforcement through projection, possibly through iterative projection on convex sets, such as Dykstra's method \citep{boyle1986method}.
In this case, there is no need to include these features in the loss function and we simply enrich $\mathcal{W}$ to include these features.
For example, we will work with graphs that satisfy in- and out-strengths in Sections \ref{sec:application1} and \ref{sec:application2}. 
In that setting, for given feature functions $s^+(w)$ and $s^-(w)$, we let $\mathcal{W} = \{ w \in \mathcal{W}^b : s^+(w) = s^{+*}, s^-(w) = s^{-*}\}$ and find the projection on $\mathcal{W}$ using Dykstra's method projecting on (scaled) simplices iteratively to enforce the features $s^+(w) = s^{+*}$ and $s^-(w) = s^{-*}$.

Moreover, projected gradient descent allows for working with features functions that do not immediately satisfy continuous differentiability {\bf (A1)} and {\bf (A2)} for all $w \in \mathcal{W}$.
For example, assortativity (Appendix~\ref{sf:assortativity}) does not satisfy {\bf (A1)} and {\bf (A2)}.
However, when assuming non-identical in-strengths $s^{+*}$, assortativity does satisfy continuous differentiability for all $w \in \mathcal{W} = \{ w \in \mathcal{W}^b : s^-(w) = s^{+*}\}$.

\section{Application: Hard Constraint Sampling of Networks} \label{sec:application1} 

In this section, we explain how the FBNC algorithm can be leveraged for network sampling. 
We argue that our method is of particular interest when obtaining a maximum entropy distribution of networks is not required. 
A particular benefit of our approach is that we force the graph to satisfy features exactly (as opposed to weak constraint sampling which is predominant in the literature) while allowing for a wide variety of features. 
To illustrate, we provide a numerical example in which there is a need to reconstruct a network satisfying partially available data. More generally, if one needs an ensemble of networks satisfying a specific set of features for testing graph-based algorithms, such as clustering algorithms for networks, the FBNC algorithm can serve as a random network generator.

\subsection{Literature Review}
Sampling networks satisfying constraints is a well-studied problem with a strong foundation in the field of statistical physics.
Many approaches aim for sampling of networks with maximum entropy in the distribution of sampled networks.
This is of particular interest in the case of graph reconstruction, where the distribution of graphs is designed to incorporate the observed information data while being as uncertain as possible with respect to other properties of the graph.
Predominant in this literature is the focus on {\em weak constraint sampling}, where ``weak constraints'' refer to the fact that the sampled networks satisfy the desired feature(s) only on average and the ensemble of networks to be sampled from is described as (macro)-canonical.

One of the most well-known examples of weak constraint sampling is the Erd\H{o}s-Rényi Model (ER-Mode) introduced by \citet{erdos1959random}.
In this model, the presence of a link between any two possible nodes in the network is independent and identically distributed by a Bernoulli random variable with probability $p$.
It follows that a network sample constructed by this model yields an expected density of the number of nodes $N$ multiplied by $p$.
Next to the network density, the degree distribution (or sequence) is perhaps one of the simplest examples of a features that occurs in data-driven studies of real-world networks, and research on random network models satisfying a degree distribution is abundant \citep{barabasi1999emergence,britton2006generating, newman2009random, chen2013directed}. 
A particular statistical model in weak constraint network sampling is the Exponential Random Graphs Model (ERGM) \citep{holland1981exponential,snijders2002markov,handcock2010modeling}, allowing for weak constraint sampling satisfying many different features.
We postpone the discussion of this model and its relation with our work to Section~\ref{sec:ERG}.
In these statistical approaches for network sampling, such as the Erd\H{o}s-Rényi Model and the Exponential Random Graph Models, one is interested in the probabilities for obtaining a particular network. This allows for estimating expected values of metrics of random ensembles of networks through sampling, even though none of the sampled networks may actually satisfy features, see Section~\ref{sec:ERG} later on.

Our FBNC approach belongs to the family of {\em hard constraint sampling} approaches, where ``hard constraints'' ensure that sampled networks are guaranteed to satisfy the given features (if possible), for which the ensemble is described as micro-canonical in the statistical physics literature.
A generic statistical model for sampling from the micro-canonical ensemble like the EGRM does for the canonical ensemble does not exist. 
Most research effort concentrates on sampling unweighted networks satisfying prespecified degree sequences \cite{kleitman1973algorithms,mckay1990uniform, artzy2005generating, fosdick2018configuring,bassler2015exact}. 
For example, \citet{bassler2015exact} show how binary networks satisfying a degree sequence and degree assortativity can be sampled.
Another strain of methods uses a Bayesian approach, where prior distributions of links and their weights are assumed. 
For example, \citet{gandy2017bayesian} show how to sample continuously weighted (financial) networks satisfying in- and out-strength sequences.
\citet{glasserman2023maximum} describe an algorithm for sampling weighted binary bipartite and directed networks satisfying degree sequences.
Finally, \citet{Squartini_Caldarelli_Cimini_Gabrielli_Garlaschelli_2018} provide a good overview of, among other things, sampling weighted networks satisfying degree sequences and in- and out-strenghts.
We extend upon this literature with the introduction of the FBNC sampling algorithm, allowing for network sampling satisfying any feature function that is continuously differentiable in the link weights of the network, see {\bf (A1)} and {\bf (A2)}.

\subsection{The FBNC Sampling Algorithm} \label{sec:alg}

We obtain $n$ samples in $\mathcal{W}_{\Phi,\phi^*}$ by taking randomly generated networks $  w^l $, $l =1, \ldots, L$. 
Assuming $E$ is known (although it may also be sampled from, for example, the ER-model), let $ w^l$ be uniformly from $\mathcal{W}$.
Then, we solve the problem in \eqref{eq:pf_theta} using the mapping $ {\cal F}_{ (\mathcal{W}, p , J )} $, which maps $ w^l $ on $\mathcal{W}_{\Phi,\phi^*}$, with $ J $ given as in (\ref{eq:lossf}).

In Example~\ref{ex:3} below, we apply the above sampling methodology to a setting similar to Example~\ref{ex:reg_explicit}.
Due to the simplicity of the example, we are able to derive a probability distribution over $\mathcal{W}_{\Phi,\phi^*}$.
Indeed, the distribution on $\mathcal{W}_{\Phi,\phi^*}$ is a symmetric triangular distribution.
Noting that in the hard-constraint sampling setting, the uniform distribution is the maximum entropy distribution, our example quickly shows the distribution of graphs does not satisfy maximal entropy.
We note that for such a simple instance the distribution of initial networks may be successfully adjusted to construct uniform samples from $\mathcal{W}_{\Phi,\phi^*}$; however, this becomes practically infeasible for more complex target sets.

\begin{example} \label{ex:3} 
{\em
Revisit the setting put forward in Example~\ref{ex:reg_explicit} with $ V = \{1 ,2\}$, $ E = \{ ( 1 ,2 ) , ( 2 ,1 ) \}$, and assume link weights $w = ( w_{1,2} ,w_{2,1} )  $ with $ w \in {\cal W}^b = [ 0, 1 ]^2$.
Now, let
\[
\Phi ( w ) = w_{ 1,2} + w_{ 2,1},
\]
and suppose we want to construct graphs $w \in {\cal W}^b $ such that $ \Phi ( w ) = 1 = \phi^*$. This gives the loss function 
\[
J ( w ) = \|  w_{1,2} + w_{2,1} - 1\|^2_2 ,
\]
Figure~\ref{fig:non-unif} shows $ {\cal W}^b $ and indicates the target set $\mathcal{W}_{\Phi,\phi^*}$ by the red line. Figure~\ref{fig:non-unif} shows the trajectories of $ {\cal F}_{ ([ 0 , 1 ]^2, 2 , J )} ( w ) $ for $w \in \mathcal{W}^b$. For  $ w $ uniformly sampled on $[ 0 , 1 ]^2$, the limiting points of $ {\cal F}_{ ([ 0 , 1 ]^2, 2 , J )} ( w ) $ follow a triangular distribution on the red line as shown in Figure~\ref{fig:non-unif}.
This illustrates that the FBNC sampling algorithm does not construct samples uniformly on $\mathcal{W}_{\Phi,\phi^*}$.
}
\end{example}

\begin{figure}
  \begin{subfigure}[c]{0.45\textwidth}
    \centering
    \includegraphics[width=\linewidth]{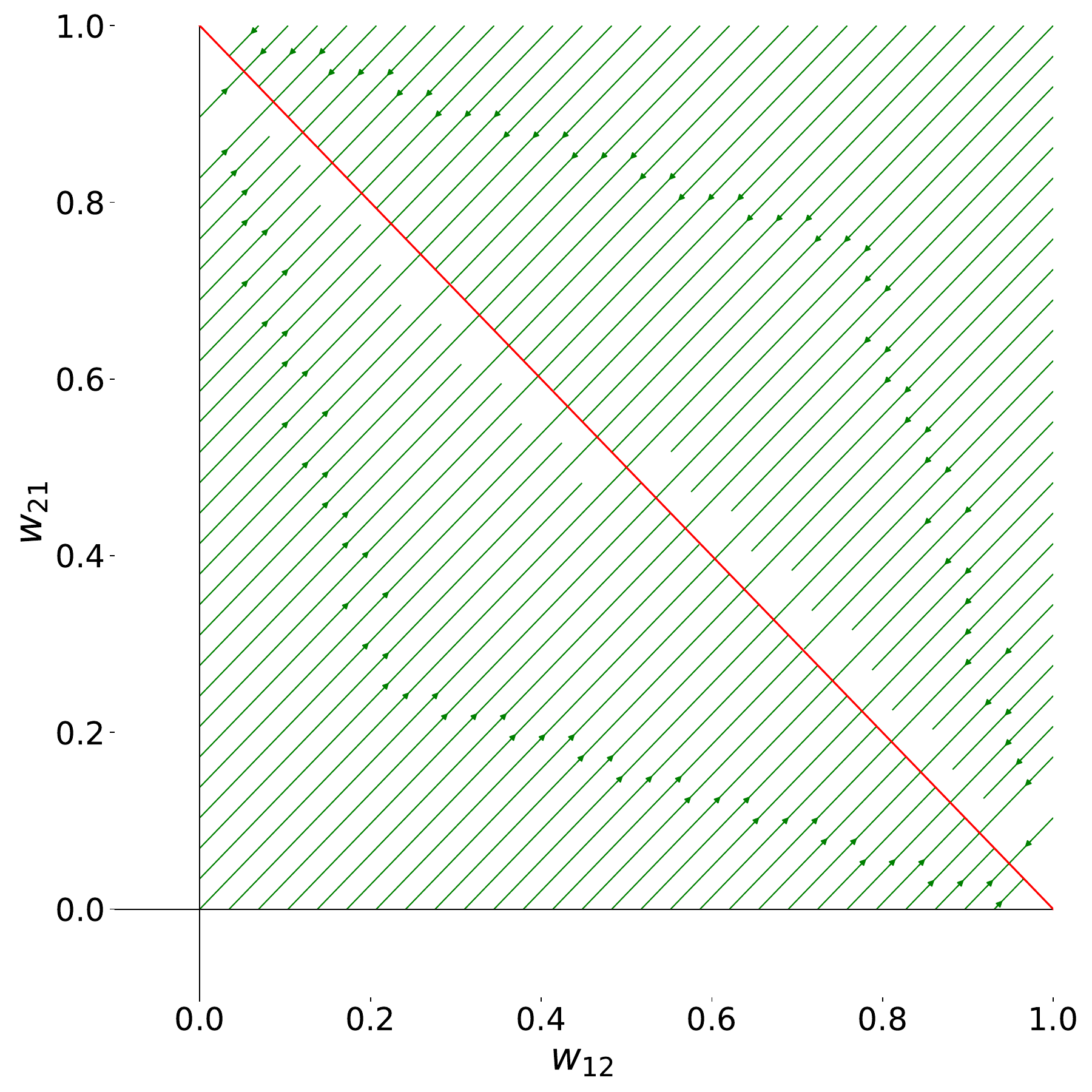}
  \end{subfigure}
  \hfill
  \begin{subfigure}[c]{0.075\textwidth}
    \centering
    \includegraphics[width=0.75\linewidth]{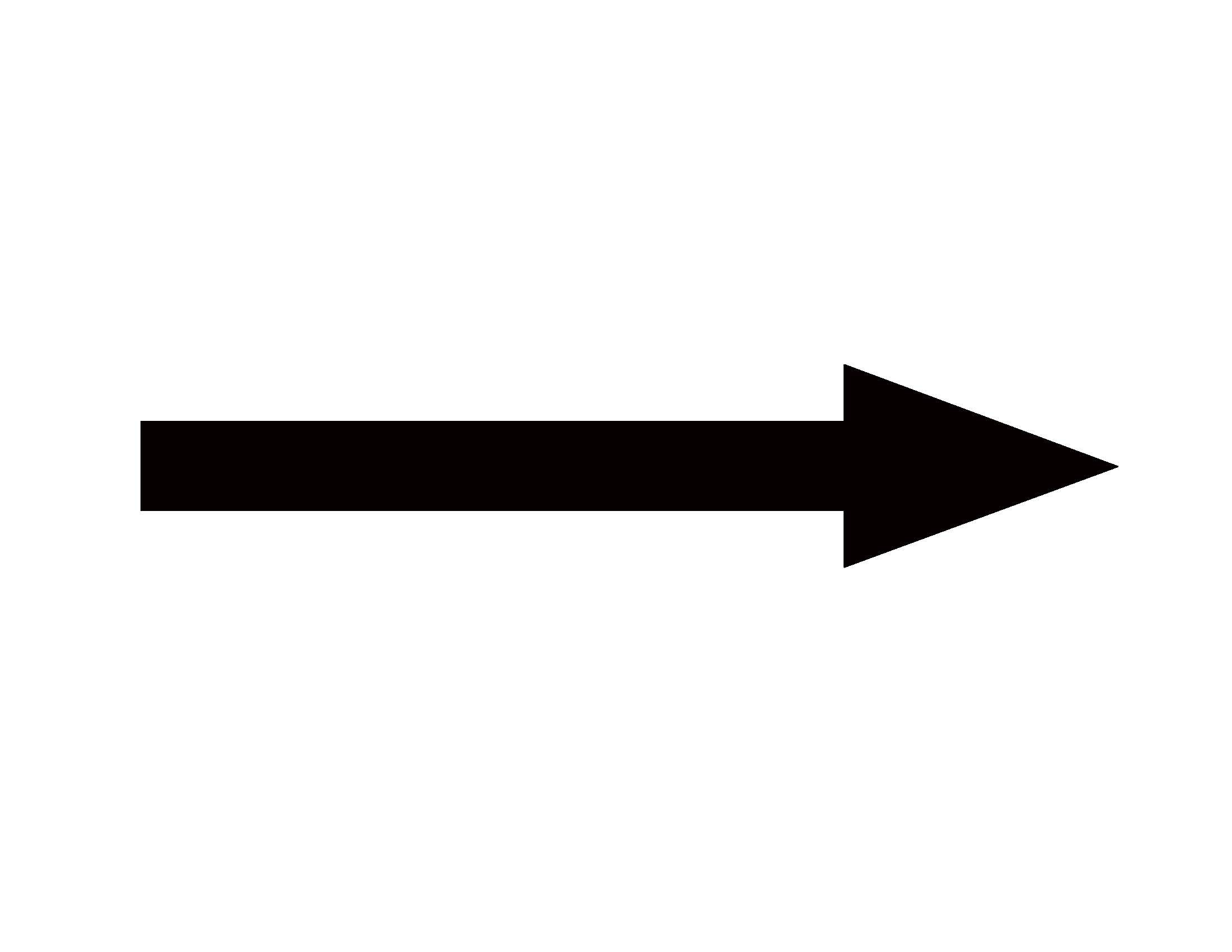}
  \end{subfigure}
  \hfill
  \begin{subfigure}[c]{0.45\textwidth}
    \centering
    \includegraphics[width=\linewidth]{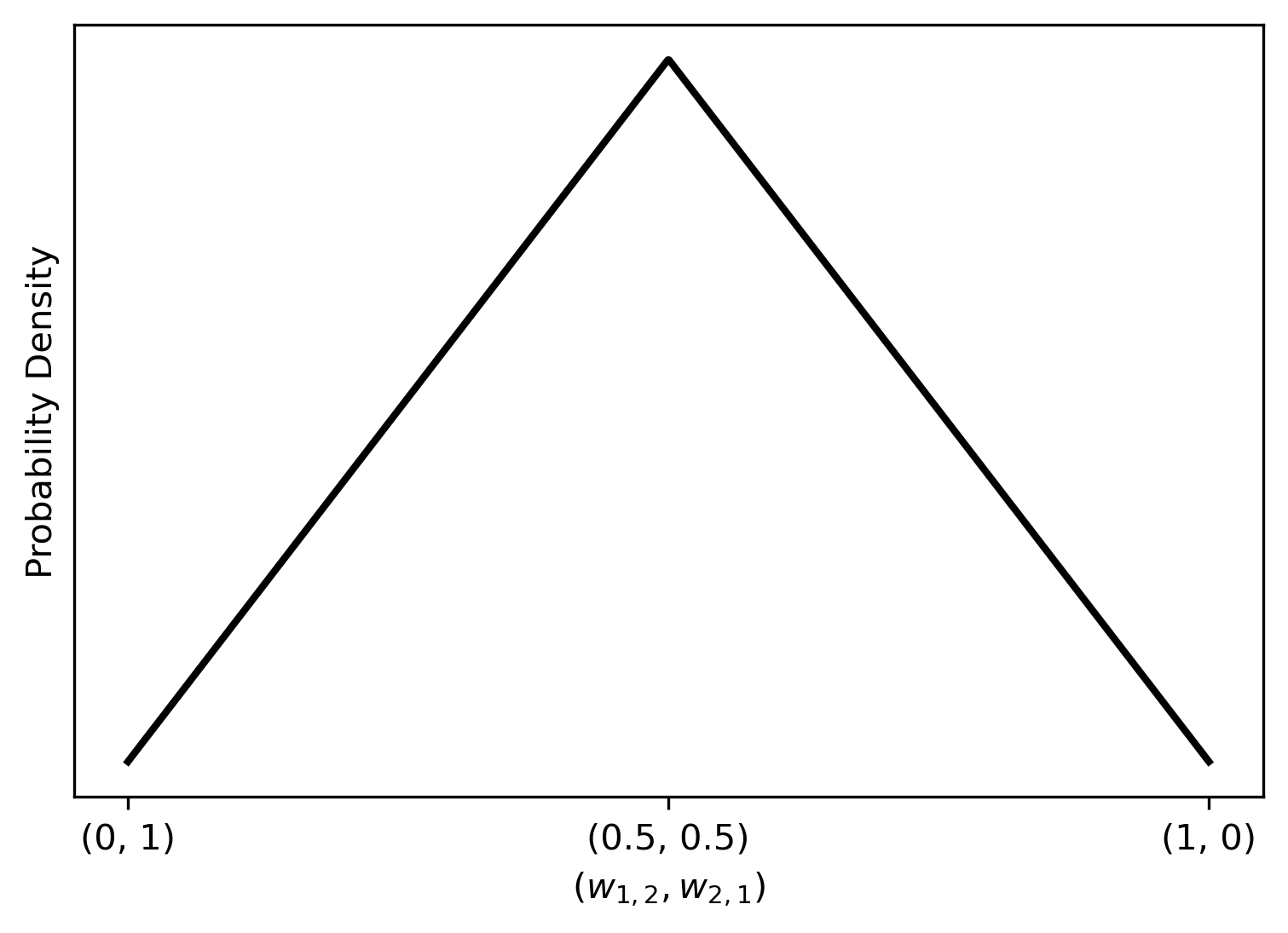} 
  \end{subfigure}
  \vfill 
  \caption{The steepest feasible descent vector field for $J(w)$. Sampling using a uniform distribution of initial networks from $\mathcal{W}^b$ results in an (almost) triangular distribution from $\mathcal{W}_{\Phi,\phi^*}$ (red line).}
\label{fig:non-unif}
\end{figure}

\subsection{Numerical Example: Reconstruction of Confidential Networks} \label{sec:num_example_sampling}

In this numerical example, we demonstrate constructing networks satisfying features using the FBNC sampling algorithm for network reconstruction.
Let us consider a (real-world) partially available financial interbank network of $N = 8$ banks in Argentina during 2018 \citep{forte2020network, github2020Glancszpigel}.
We let $W_{i , j }$ reflect the money lent from bank $i$ to bank $j$. Furthermore, we assume that information is available on the total amount of money lent by each bank, i.e., we suppose that the out-strengths per bank are available and given by
\[
 s^{+*} =  (26.5, 24.9, 30.1, 36.9, 11.1, 24.1, 34.4, 6.3 ),
\]
see (\ref{eq:s+}) for a definition, as well as the total amount of money borrowed by each bank, i.e., we suppose that in-strengths per bank are available and given by
\[
s^{-*} = (43.8, 35.9, 36.5, 14.6, 26.2, 0.9, 35.6, 0.8 ).
\]
To enrich our example, we assume a feature indicating the (relative) dispersion of weights based on the Herfindahl–Hirschman index, which we will refer to as the \textit{concentration index}, given by 
\begin{align} \label{eq:HOSglobal}
    H(W) = \sum_{(i,j) \in E} \left( \frac{W_{i,j}}{s_i^+}\right)^2,
\end{align}
and its feature value is
\[
H^* = 2.84 .
\]
Moreover, note that graphicality of the three features is automatically satisfied as the features coexist within the partially available network.

If set of links $E$ is unavailable, we may simply assume that $E = V^2 \setminus \{ (i, i) : i \in V\}$; however, this may cause network samples to become overly dense.
Another approach is to use an appropriate random graph model to sample the support of the constructed graphs, such as the Erd\H{o}s-Rényi model, where we estimate the density of the graph or assume the graph to be sufficiently dense for the features to be graphical.
More informative approaches for choosing $E$ include the in- and out-degree sequence and the in- and out-strength sequence.
We refer the reader to \citet{Squartini_Caldarelli_Cimini_Gabrielli_Garlaschelli_2018} for an overview of such methods used in network reconstruction. 
Typical in these approaches is the focus on maximum entropy sampling of networks satisfying node degrees and/or strengths in expectation or exactly.
Our approach differs in the fact that sampled networks are from an unknown distribution; however, does allow for sampling of networks satisfying features beyond node degrees and strengths in the hard constraint setting, such as the concentration index.

\begin{figure}
    \centering
    \includegraphics[width=0.45\textwidth]{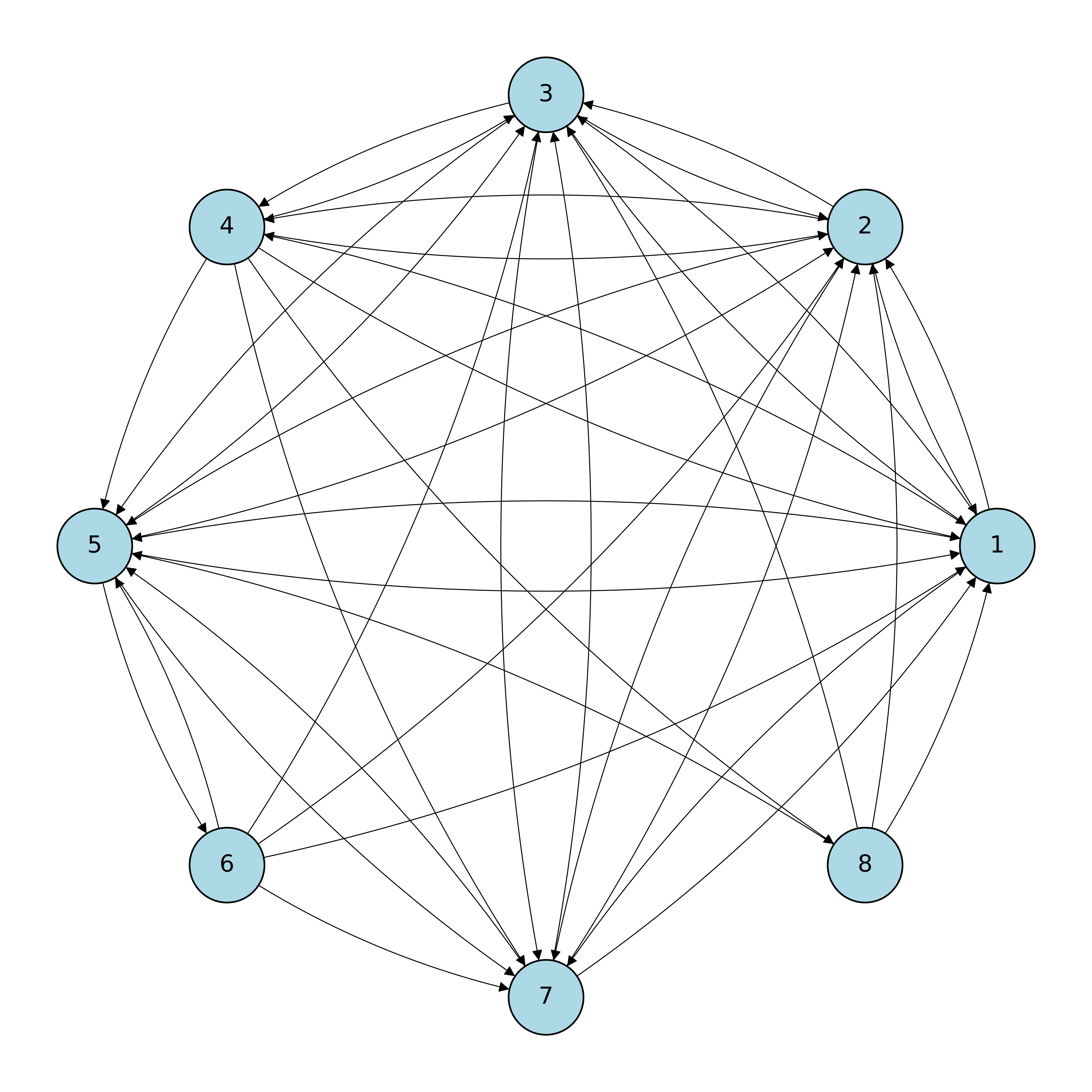}
    \caption{Graph $(V,E)$ used for the Argentinian interbank network reconstruction. The density of the network equals 0.45.}
    \label{fig:supp}
\end{figure}

In the current example, we will assume we can observe the set of links $E$ of the Argentinian financial network, as shown in Figure~\ref{fig:supp}.
We let 
\[
\mathcal{W}^b = \left[ 0,\min \left( \max_i s^{+*}_i, \max_i s^{-*}_i \right) \right]^{|E|} = \left[ 0,36.9\right]^{|E|}
\]
and we can directly project a $w \in \mathcal{W}^b$ to the set
\[
W = \left\{ w \in \mathcal{W}^b : s^+(w) = s^{+*}, s^-(w) = s^{-*} \right\}
\]
using Dykstra's projection method \citep{boyle1986method}.
Now, we can apply the algorithm in \eqref{eq:descentalgo} with the loss function
\[
J ( w ) = \|H(w) - H^*\|^2_2,
\]
to construct $L$ networks via 
$$
\hat{W}_l = \mathcal{F}_{({\cal W}, 2, J; \alpha, T, \gamma )}(w^l), \quad \forall l = 1, \ldots, L,
$$ 
where $w^1, \dots, w^{L}$ are sampled from $\mathcal{ W }$ through projecting uniform samples from $\mathcal{ W }^b$ on $\mathcal{ W }$.
For the present example, we let $L=1000$ and choose the values $\alpha = 20.0$ and $\sigma = 0.001$ as in the current setting, we are not necessarily interested in implicit regularization.
Additionally, we set $T = 10000$ and tolerance parameter $\gamma = 0.001$.

\begin{figure}
\begin{minipage}{.48\textwidth}
  \centering
  \includegraphics[width=\linewidth]{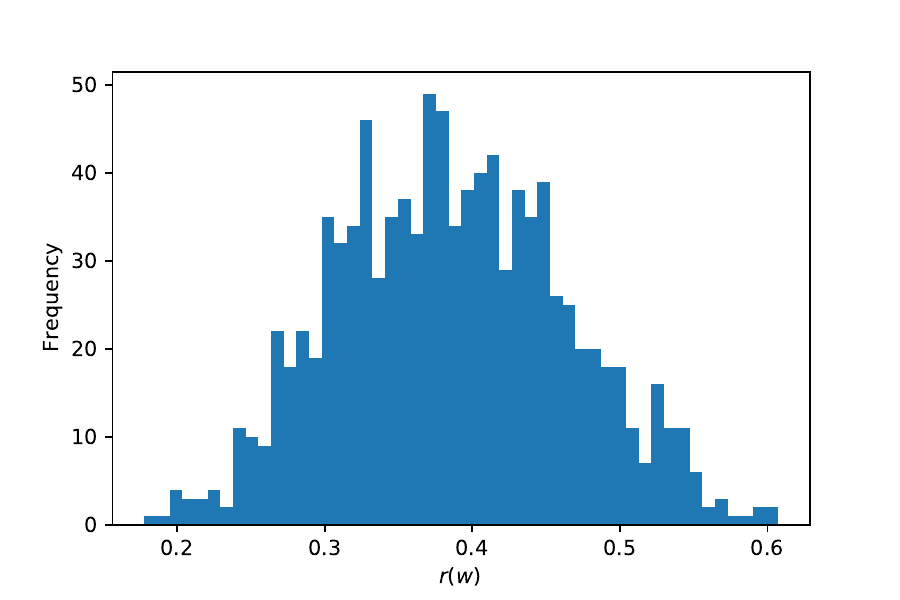}
    \caption{Histogram of reciprocity values for all $\hat{w}_l$, for $l = 1, \ldots, 1000$.} 
    \label{fig:R}
\end{minipage}%
\hspace{0.1cm}
\begin{minipage}{.48\textwidth}
  \centering
  \includegraphics[width=\linewidth]{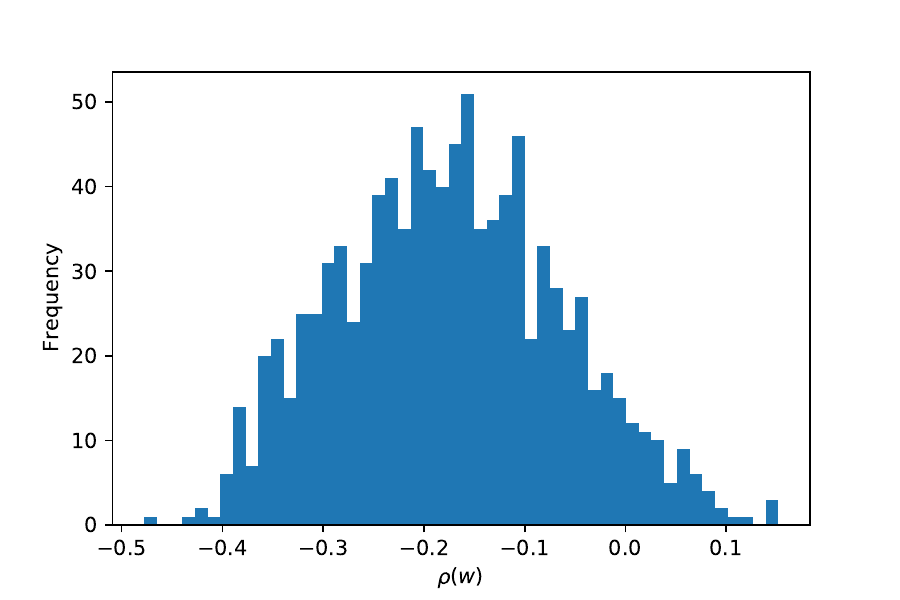}
    \caption{Histogram of assortativity values for all $\hat{w}_l$ for $l = 1, \ldots, 1000$.} 
    \label{fig:rho}
\end{minipage}%
\end{figure}

In the experiment, all network samples $\hat W_l$, for $l = 1, \ldots, L$, had the requested feature values. In Figures \ref{fig:R} and \ref{fig:rho}, we show the diversity of the constructed networks by computing two alternative features: reciprocity $r(w)$ (see Appendix \ref{sf:reciprocity}) and assortativity $\rho(w)$ (see Appendix \ref{sf:assortativity}) for the constructed networks.
Furthermore, we visualize two samples in Figures \ref{fig:high_re} and \ref{fig:low_re}.
As can be observed in Figures \ref{fig:R} and \ref{fig:rho}, there is a wide variety of graphs that satisfy the same link set $E$, in-strengths, out-strengths, and the concentration index. The key takeaway from the example is that even for a given link set $ E $ and three features (in- and out-strengths, as well as given concentration index), the target set is rich, and randomly started the FBNC algorithm constructs a wide variety of samples.

\begin{figure}[!h]
\begin{minipage}{.45\textwidth}
  \centering
  \includegraphics[width=\linewidth]{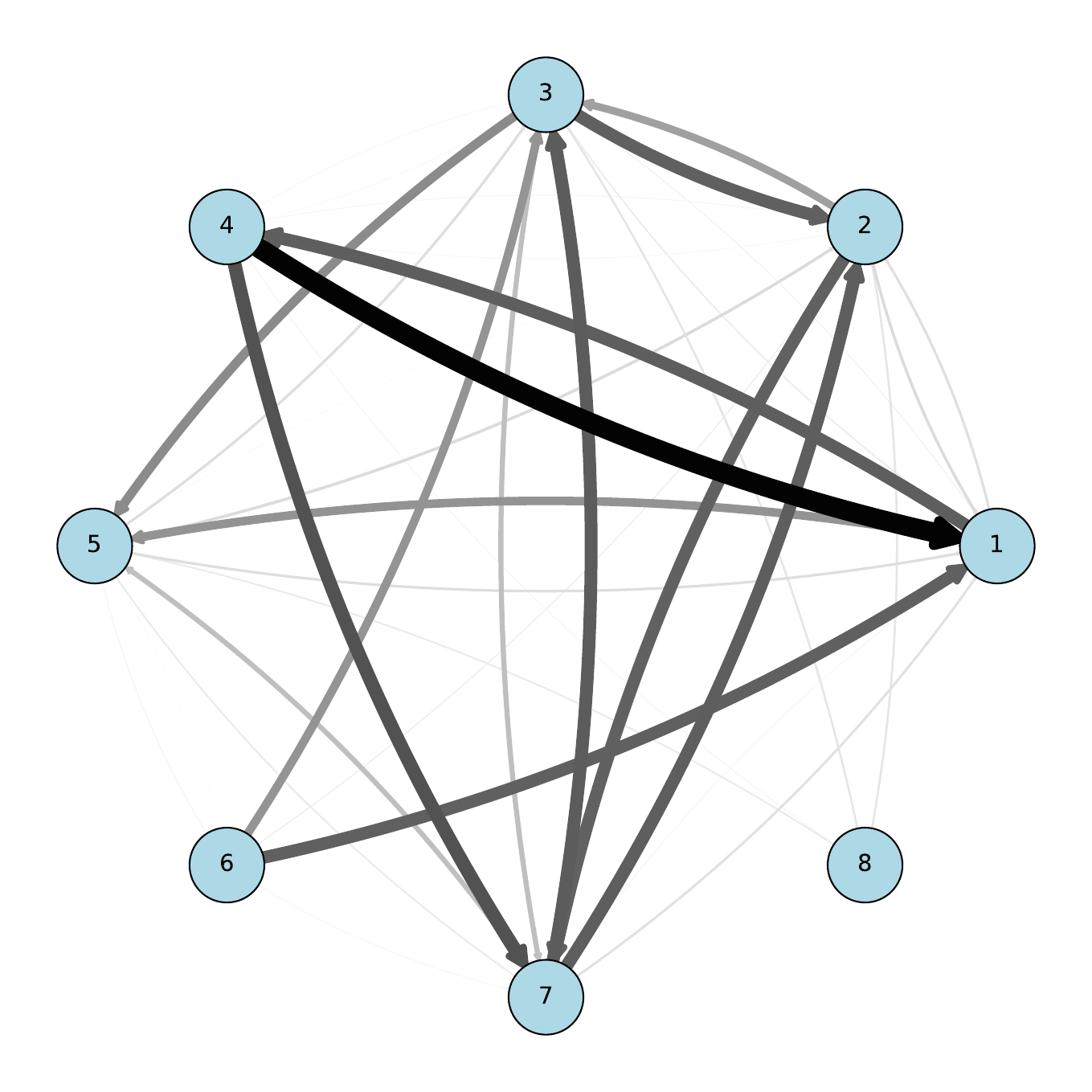}
    \caption{Network $\hat{W}_{164}$. Thicker and darker arrows correspond to larger link weights.}  \label{fig:high_re}
\end{minipage}%
\hspace{0.04\textwidth}
\begin{minipage}{.45\textwidth}
  \centering
  \includegraphics[width=\linewidth]{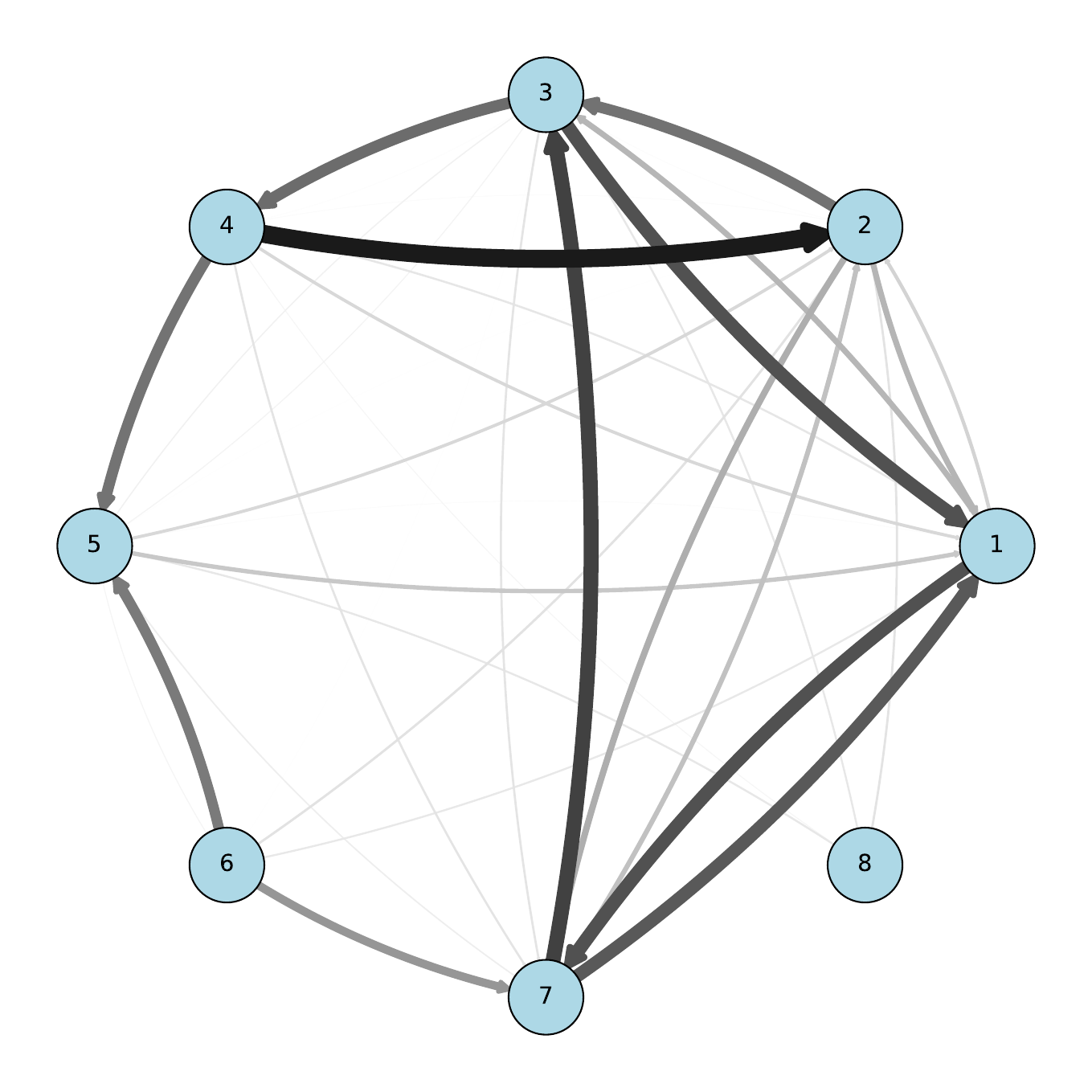}
    \caption{Network $\hat{W}_{516}$. Thicker and darker arrows correspond to larger link weights.} \label{fig:low_re}
\end{minipage}%
\end{figure}

\subsection{Relation to Exponential Random Graph Models} \label{sec:ERG}

Frequently used models for weak constraint network sampling are Exponential Random Graph Models (ERGM) \citep{holland1981exponential,snijders2002markov,handcock2010modeling}.
ERGMs are mostly used for modeling unweighted networks, specifically in social network analysis \citep{lusher2013exponential}. 
The ERGM attempts to model how links are formed in the network. 
To that end, a maximal entropy probability distribution is constructed over all networks such that the observed network features are reproduced in expectation.
A particular strength of ERGMs is that any structural feature can be used. 
Therefore, the ER-Model can be considered a special case for the ERGM. 
The flexibility of features used allows us the ERGM to model a vast array of networks, including both directed and undirected networks, and finally, the specification of the ensemble model can be transformed onto the support of the space of weighted networks, introducing the Generalized Exponential Random Graph Model (GERMG) \citep{desmarais2012statistical,krivitsky2012exponential}.

To obtain network samples using ERGMs, we would first need to construct the ensemble model. The parameters of the ensemble model are estimated by Maximum Likelihood Estimation, which is usually approximated using a Markov Chain Monte Carlo approach due to the inherent intractability, see \citet{snijders2002markov}. 
Then, we can sample networks from this ensemble model using a similar Markov Chain Monte Carlo approach. 

Although weak sampling methods are powerful in explaining relationships in networks, they are not directly applicable to constructing networks from the micro-canonical ensemble in case of continuously weighted features.
Such ensembles are defined by hard constraints that span the target set in a subspace.
Hence, the probability for sampling graphs strictly satisfying a continuously weighted feature up to a \textit{tolerance parameter} $\gamma$, that is, $|\Phi(w) - \phi^*| \leq \gamma$, tends to 0 as $\gamma \to 0$.

\section{Application: What-If Analysis} \label{sec:application2}\label{sec:fbns}\label{sec:what-if}

In this section, we show how the FBNC algorithm can be used for what-if analyses to get insight into the impact of stimulating a certain feature (value) in an existing network, for example, a feature that expresses network connectivity.
This type of analysis can help managing desirable features in the network structure. 
Moreover, the resulting steepest descent trajectory can be a road map to realize an additional feature value in an existing network, thus providing a useful tool in policy development.

Section~\ref{subsec:social_network_example} provides a social network example in which network connectivity is increased using the FBNC algorithm. Section~\ref{subsec:financial_network_example} demonstrates how financial exposures can be efficiently diversified in financial networks, aiming to reduce potential systemic risk.

\subsection{A Social Network Example}\label{subsec:social_network_example}

In this section, for improving its connectivity using a small number of additional links.
Let us consider a (fictive) organizational network $ W_0 $ of $N=27$ agents as shown in Figure~\ref{fig:school_class}. 
Matrix $W_0 \in \mathcal{W}^M$ is stochastic, so that link weights reflect the relative strength of an advice-seeking relationship between a pair of agents.
Colors indicate clusters of agents.

\begin{figure}
    \centering
    \includegraphics[width=0.49\linewidth]{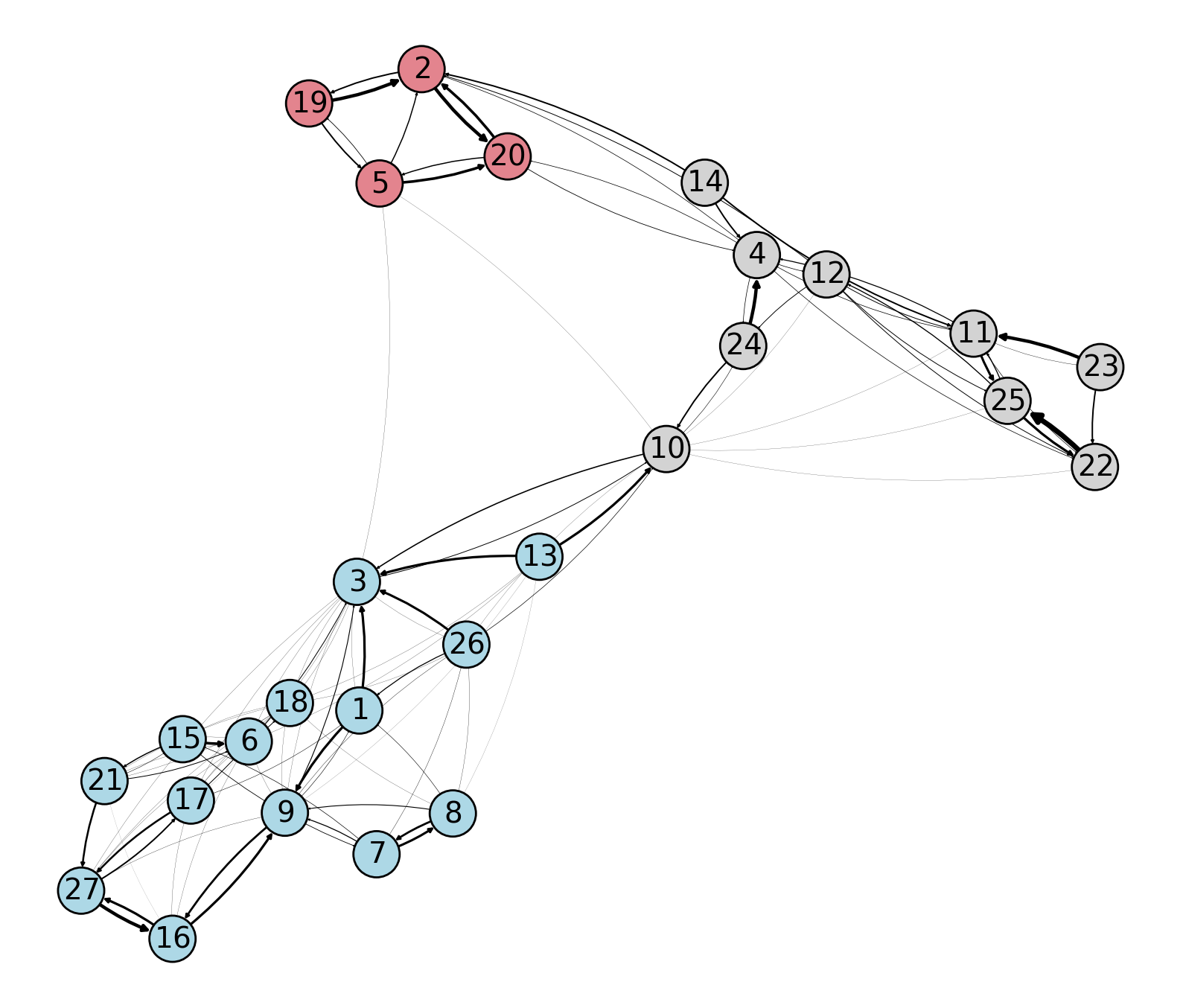}
    \caption{Social network of $N=27$ agents with modularity $\Phi_1(W_0) = 0.356 $ and Kemeny constant $\Phi_2(W_0) = 63.277$.
    Closely interconnected nodes are assigned to one out of three clusters. Thicker edges represent larger edge weights. }
    \label{fig:school_class}
\end{figure}

Features of interest in organizational networks, like the one in Figure~\ref{fig:school_class}, are related to connectivity or cohesiveness of the network, as they allow to assess the level of knowledge dissemination throughout an organization \citep{weenig1999communication, aral2007productivity,zhang2016information} and the effectiveness of the decision-making process \citep{cross2010organizational}.
For these reasons, we will in the following use features capturing modularity and connectedness.
For $ W \in {\cal W} $, let $ \Phi_1 ( W )  $ denote the modularity of $ W $ \citep{Newman_2006}, and let $ \Phi_2 (W) $ denote the connectivity of $ G $ expressed via the Kemeny constant \citep{kemeny1976finite} ($  \Phi_2 (W) = \infty $ in case $ W$ is not connected).
For both feature functions, we will provide a detailed definition later in Appendix~\ref{sf:mod}, \ref{sf:kemeny1}).

The question we now address is: what is the impact on the given network if the connectivity is increased to a given level?
This is a what-if analysis in which we adjust a single feature of a given network while aiming to minimally affect the overall network structure.
Suppose we want to adjust the connectivity feature for $W_0$ from the actual value $\Phi_2 ( W_0 ) $ to, say, $ \frac{1}{2} \Phi_2 ( W_0 ) $.
As the Kemeny constant is inversely related to network connectivity, this will result in a better-connected network.
To achieve this, we let $E = V^2 \setminus \{(i,i) : i \in V\}$ and apply $\mathcal{F}_{\left(\mathcal{W},p,\|K(W) - 31.638\|_2^2\right)}(w(0))$ for solving
\begin{align*}
    \begin{array}{rl}
    \min \quad &  \| w - w(0)\|_p \\
    \textup{s.t.} \quad & w \in {\cal W}_{K(W),31.638},
    \end{array}
\end{align*}
where $p \in \{1,2\}$, which provides insight into what (potentially non-existing) ties should be stimulated to improve connectivity.

First,  let us consider  the $L^2$-descent FBNC algorithm, where we obtain the (densely) constructed network
\begin{align*}
    \hat{W}_1 = \mathcal{F}_{\left(\mathcal{W}^M, 2, \|K(w) - 31.638\|_2^2\right)}\left(w(0)\right),
\end{align*}
as depicted in Figure~\ref{fig:schoolclass_fits} (left), where $\alpha = 10^{-3}$, $\gamma = 10^{-3}$ and $T=10000$, which is sufficiently large to reach convergence.
One could argue that less invasive network transformations in the number of link adjustments often provide much more insight into how the network structure must change to satisfy a certain feature (e.g., what links to stimulate to improve connectivity within an organization, instead of stimulating all non-present links).
Implicit regularization via the $L^1$-norm in this case has the benefit that the network is transformed using only a subset of links compared with the $L^2$-norm regularization, while still fitting the desired feature(s) in the network.
To that end, 
\begin{align*}
    \hat{W}_2 = \mathcal{F}_{\left(\mathcal{W}^M, 1, \|K(w) - 31.638\|_2^2\right)}\left(w(0)\right),
\end{align*}
see Figure~\ref{fig:schoolclass_fits} (right), for $\alpha = 10^{-3}$, $\gamma = 10^{-3}$ and $T=10000$.

\begin{figure}[t!]
\centering
\begin{minipage}{.475\textwidth}
  \centering
  \includegraphics[width=\linewidth]{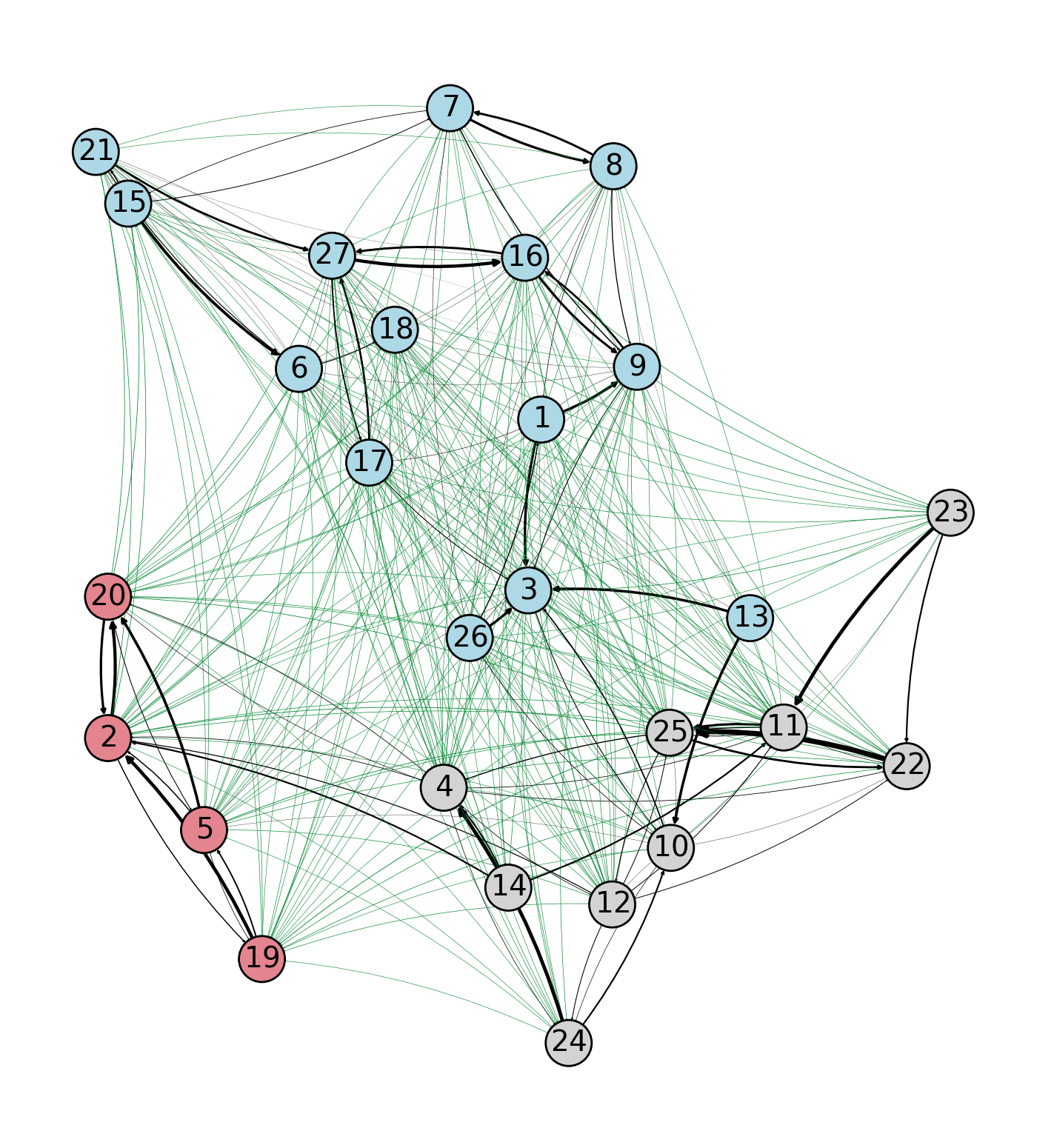}
\end{minipage}%
\hspace{0.5cm}
\begin{minipage}{.475\textwidth}
  \centering
  \includegraphics[width=\linewidth]{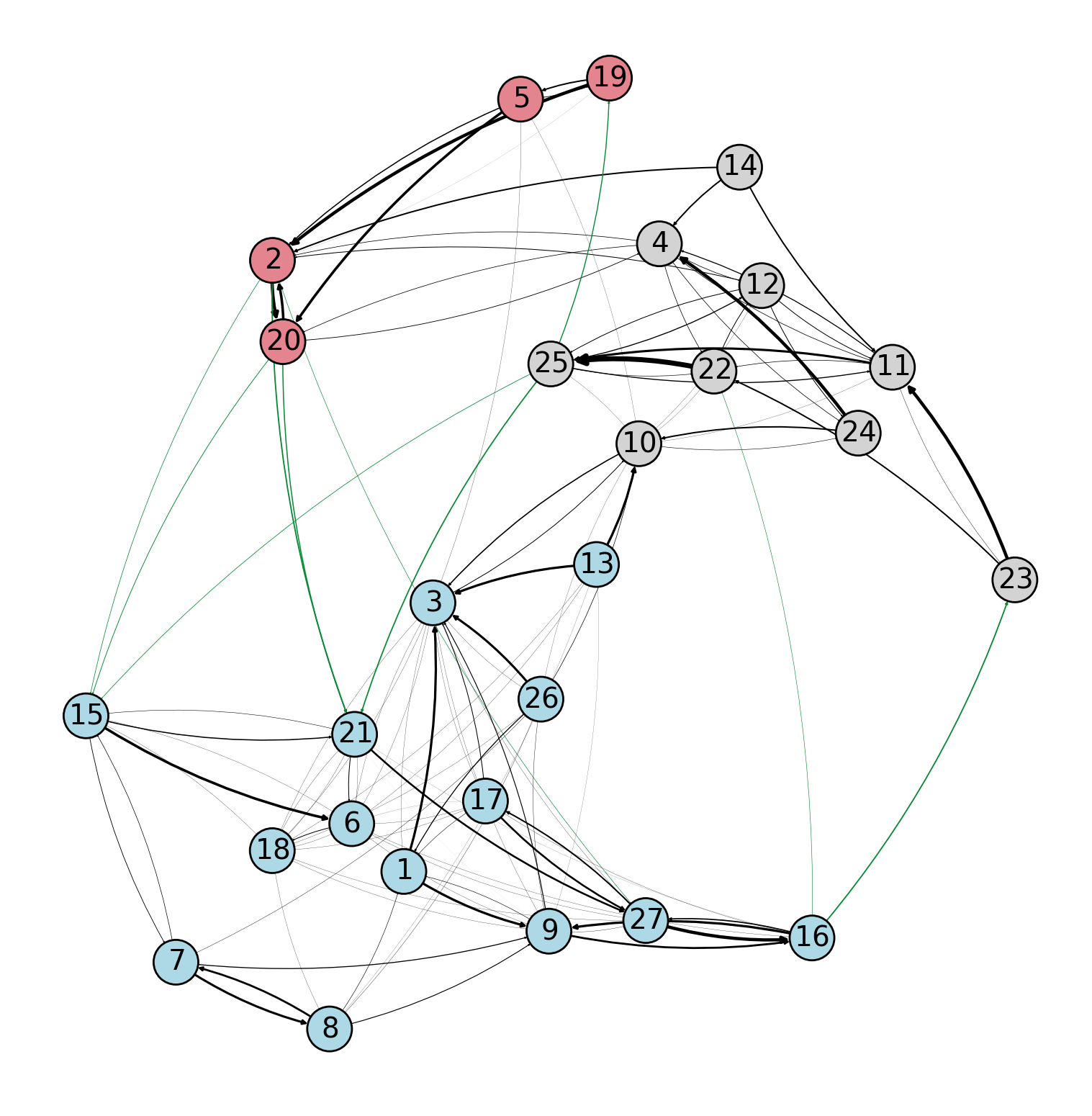}
\end{minipage}
\caption{Two adjustments $\hat{W}_1 $ (left) and $\hat{W}_2 $ (right) of $W_0$ (see Figure~\ref{fig:school_class}), with equivalent increase of connectivity, i.e., $\Phi_2 ( \hat{W}_1 ) = \Phi_2 ( \hat{W}_2 ) = 31.638 = \frac{1}{2}\Phi_2 ( W_0 ) $. 
New links are shown in green and made thicker for visibility. 
The densely constructed network $\hat{W}_1$ contains many new small weighted links, whereas the more parsimoniously constructed network $\hat{W}_2$ contains few (but more significant) new links.  }
\label{fig:schoolclass_fits}
\end{figure}

Applying the implicit regularization to the FBNC gradient search via the $L^1$-norm shows that it is possible to improve connectivity using only a small set of new links.
Note that our approach does not need a specification of costs for creating new links to obtain this parsimoniously constructed network.
In that sense, our model is parameter-free for improving graph connectivity where the (sparse) network transformation is a byproduct of choosing the $L^1$-norm regularization.
Moreover, the trajectory of our algorithm can serve as a road map for policy implementation to improve the connectivity in the network, as it subsequently adds links with the most potential to improve the connectivity. 
For the current instance, our algorithm created the following links in order $(1,26)$, $(15,22)$, $(15,21)$, $(1,20)$, $(24,20)$, $(19,20)$, $(24,14)$, $(24,18)$, $(1,14)$, $(19,14)$, so that $(1,26)$ can be considered to be the most fruitful potential link, followed by $(15,22) $, and so forth.

\subsection{A Financial Network Example}\label{subsec:financial_network_example}

Over the years, banks have become increasingly interconnected through the globalization of finance.
Understanding the effects of these interdependencies is of critical importance in managing so-called systemic risk, the risk of a strong economic downturn due to cascading failure in the financial sector, see \citet{hurd2016contagion,gandy2017bayesian} for various definitions.
To manage systemic risk, regulators need to identify features of a system that contribute to systemic risk.
For example, it has been shown that increasing exposure diversification can reduce systemic risk \citep{sachs2014completeness, Capponi_Chen_Yao_2016}.

Let us assume that a regulator wants to increase the exposure diversification on systemic risk indicators in an observed financial network $W_0$, where the nodes represent banks, and the links represent loans between banks.
In our analysis, we will not model banks' behavior nor specify a policy for how banks may be regulated to increase or decrease exposure diversification (e.g., the type of incentive for banks to increase or decrease exposure diversification), but instead, we leverage that our FBNC algorithm allows for producing versions of $W_0$ as if a policy stimulating exposure diversification is implemented.
Here, we assume that the policy is only successful if it is minimally disruptive on the market and therefore should preserve the network structure as much as possible.
The result of our approach is that we exclusively study the effects of regulating exposure diversification for a given financial network $W_0$, separately from the policy itself.

Let $W_0$ be the financial network used in Section~\ref{sec:num_example_sampling}, where now we postulate that the network is fully observable (including link weights), see Figure~\ref{fig:min_max_exposure} (left).
We define the economic needs (the interbank assets and liabilities) of each bank by the in-strengths and out-strengths. 
Furthermore, we let the set of links of the network be defined by the set of trading counterparties: $(i,j) \in E$ if $W_{i,j}>0$ and $(i,j) \notin E$ otherwise, and accordingly, only lending relationships are parameterized, see \eqref{eq:parameterization}.
Then, we measure the diversification of exposures using the concentration index $H(W)$, see \eqref{eq:HOSglobal}, where a high (low) concentration index implies low (high) exposure diversification.
For the network $W_0$, the concentration index $H(W_0) = 2.84$.

We will increase the diversification of exposures measured by the concentration index introduced in \eqref{eq:HOSglobal} of $W_0$,
while ensuring the following conditions in the resulting graph $\Hat{W}$:
\begin{enumerate}[label=\theenumi]
    \item \textbf{No interference in counterparties.} We ensure that banks are always exposed to the same or fewer counterparties, as ideally, banks should enjoy autonomy in choosing their counterparties without any regulatory interference.
    Therefore, let $W_0$ be parameterized by $w(0)$, using that $E = \{ (i,j) \in V^2 : (W_0)_{i,j} > 0\}$, so that $\Hat{w}_{i,j} = 0$ if $w(0)_{i,j} = 0$ for all $(i,j) \in V^2$.
    \item \textbf{Equivalent economic needs.} We match the total ingoing and outgoing exposures for each bank as any regulatory policy ideally does not affect the economic needs of each bank, i.e., for all $i \in V$ it holds that
    \[
             \sum_{j \in V_i} \hat{w}_{i,j} = \sum_{j \in V_i} w(0)_{i,j} \quad \mbox{and} \quad \sum_{j \in V_i} \Hat{w}_{j,i} = \sum_{j \in V_i} w(0)_{j,i}.
    \]
    \item \textbf{Minimal reconfiguration in the network.} We ensure that the resulting exposures are close to the initial financial network so that $ \| \Hat{w} - w(0) \|_2$ is small, as the implementability of financial regulations and policies strongly depends on the disruptive effects on the market.
    In other words, the resulting financial network structure should not deviate too much from the initial financial network, which means that we do not significantly alter the overall weights of the network.
\end{enumerate}

The use of the concentration index in \eqref{eq:HOSglobal} for measuring exposure diversification differs from previous studies investigating exposure diversification in financial networks who mainly focusing on the number of counterparties as the measure for diversification, e.g., see \citet{allen2000financial, nier2007network,elliott2014financial, ma2019diversification}.
However, having many counterparties does not imply a diversified set of exposures, as almost all lending can be concentrated at a single counterparty.
Therefore, we argue for the concentration index as it effectively captures the dispersion of loans; see \eqref{eq:HOSglobal}.
A similar measure for quantifying the diversification of loans to ours is used by \citet{sachs2014completeness} and \citet{Capponi_Chen_Yao_2016}.
\citet{sachs2014completeness} use an ensemble of networks with exposure sizes that have varying levels of entropy and subsequently test for systemic risk.
\citet{Capponi_Chen_Yao_2016}  
utilize the work of \cite{eisenberg2001systemic} to model contagion effects for networks with different levels of loans diversification measured by matrix majorization. 
Both works conclude that better diversification of loans can benefit systemic risk.

We let our FNBC algorithm increase exposure diversification in $W_0$ through imposing a lower concentration index; $H^* = \frac{3}{4} H(W_0) = 2.13$ (as to increase exposure diversification), by constructing the graph
\[
\hat{W} = \mathcal{F}_{\left(\left[ 0,\min \left( \max_i s^+_i, \max_i s^-_i \right)\right]^{|E|},2, \|H(w) - 2.13\|^2_2\right)}\left(w(0)\right).
\]
As by construction, we do not optimize the non-existing links, and so we satisfy (C1).
We will use the steepest feasible descent algorithm presented in \eqref{eq:descentalgo}, where we determine the steepest feasible descent direction $\delta^{\alpha}(w,2)$ while imposing the in-strengths and out-strengths as features through projection, thereby satisfying (C2).
Furthermore, choosing a small step size $\alpha = 0.05$ (and $\sigma = 0.5$) will implicitly minimize $ \| \Hat{w} - w_0 \|_2$, thus satisfying (C3).
Finally, we let $ \gamma = 10^{-3} $ in the termination condition and choose $T=50000$, which is large enough for convergence.
In Figure~\ref{fig:min_max_exposure}, we plot the transformation of $W_0$ to $\Hat{W}$.
The illustration highlights a noticeable pattern where larger loans decrease in magnitude, while smaller ones show a corresponding increase.

\begin{figure}
\centering
\begin{subfigure}[c]{.45\textwidth}
  \centering
  \includegraphics[width=\linewidth]{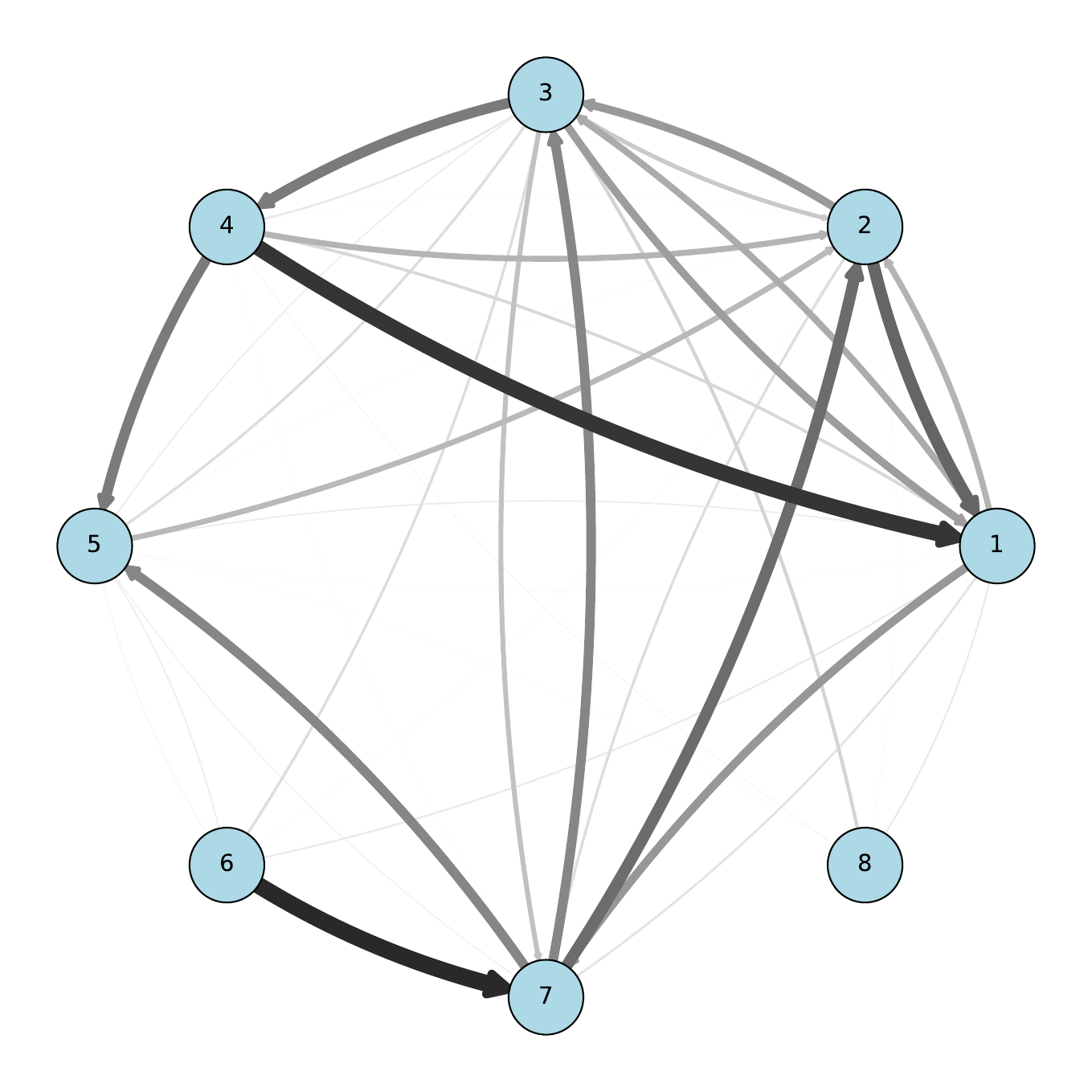}
\end{subfigure}%
\hfill
  \begin{subfigure}[c]{0.075\textwidth}
    \centering
    \includegraphics[width=0.75\linewidth]{Figures/arrow.jpg}
  \end{subfigure}
  \hfill
\begin{subfigure}[c]{.45\textwidth}
  \centering
  \includegraphics[width=\linewidth]{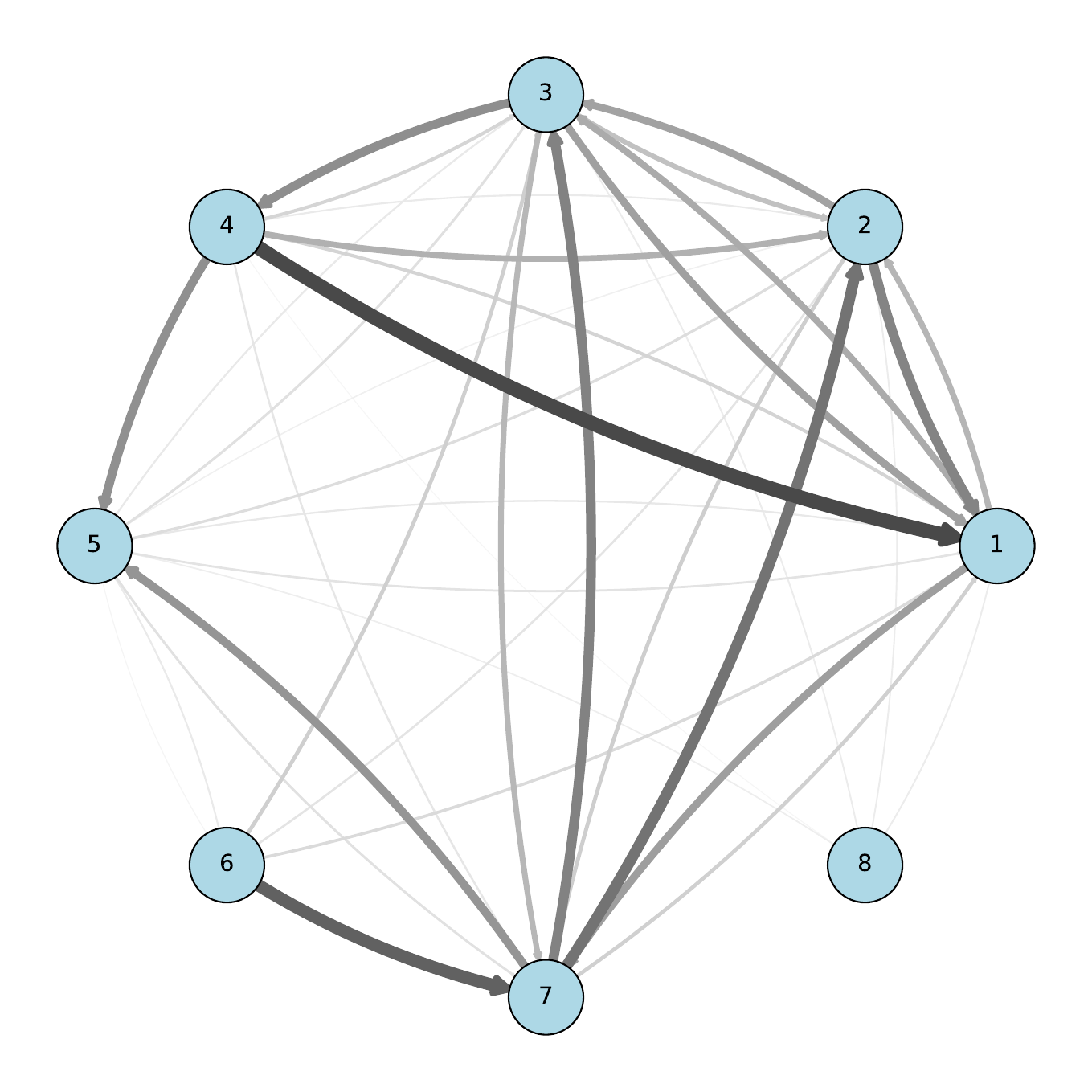}
\end{subfigure}%
\caption{$W_0$ (left, $H(W_0) = 2.84$) and $\Hat{W}$ (right, $H(\Hat{W}) = 2.13$) with increased exposure diversification while satisfying economic needs and counterparty constraints. Thicker and darker arrows correspond to larger link weights.}
\label{fig:min_max_exposure}
\end{figure}

The matrix $\Hat{W} - W_0$ specifically shows how the regulator should adjust $W_0$ to increase exposure diversification without interference in counterparties, preserving economic needs, and minimally reconfiguring the network.
Indeed, $\Hat{W} - W_0$ specifies the set of loans between banks to effectively reduce the exposure concentration between these banks.

Note that one may want a minimum for the concentration index $H(w)$, which one can find by using $J(w) = H(w)$. 
Here, we find ourselves in the context of classic optimization, where we can leverage the convexity of $H$, as shown in Lemma~\ref{lem:convexity}, allowing us to find a global minimizer for the concentration index for fixed out-strengths for each bank.
\begin{lemma} \label{lem:convexity}
    The concentration index as defined in \eqref{eq:HOSglobal} is a convex function in $W$, provided that $s^+_i(W) = s_i^{+*}$, for $i \in V$.
\end{lemma}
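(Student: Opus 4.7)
The plan is to reduce the claim to the elementary fact that a nonnegative-weighted sum of squares of linear functions is convex. The only substantive observation is how the hypothesis $s_i^+(W) = s_i^{+*}$ is used: on the feasible set where the out-strength constraints hold, each denominator $s_i^+$ in the definition of $H$ becomes the constant $s_i^{+*}$, so the rational structure of $H$ disappears and what remains is a quadratic.

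Concretely, I would proceed as follows. First, fix any $W, W' \in \mathcal{W}$ with $s_i^+(W) = s_i^+(W') = s_i^{+*}$ for every $i \in V$, and any $\lambda \in [0,1]$. Set $W_\lambda := \lambda W + (1-\lambda) W'$. Because $s_i^+$ is linear in the entries of the $i$th row, we also have $s_i^+(W_\lambda) = s_i^{+*}$, so $W_\lambda$ lies in the feasible set and
\begin{equation*}
H(W_\lambda) = \sum_{(i,j) \in E} \frac{(W_\lambda)_{i,j}^2}{(s_i^{+*})^2}.
\end{equation*}
Second, since each $s_i^{+*} > 0$ is a fixed positive constant, each summand is of the form $c_{i,j}\, W_{i,j}^2$ with $c_{i,j} = 1/(s_i^{+*})^2 \geq 0$. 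The map $x \mapsto x^2$ is convex on $\mathbb{R}$, and the coordinate projection $W \mapsto W_{i,j}$ is linear, so each term $W \mapsto c_{i,j} W_{i,j}^2$ is convex on the affine feasible set. A nonnegative sum of convex functions is convex, which gives $H(W_\lambda) \leq \lambda H(W) + (1-\lambda) H(W')$ and completes the argument.

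As a sanity check one can also compute the Hessian: on the affine subspace defined by the row-sum constraints, $H$ is a positive semidefinite quadratic form (diagonal with entries $2/(s_i^{+*})^2 \geq 0$), which again yields convexity. No step in this plan is delicate; the only potential pitfall is interpretational, namely conflating $s_i^+$ as a function of $W$ (which would make $H$ a ratio and spoil convexity in general) with $s_i^{+*}$ as a fixed scalar, which is resolved by explicitly restricting to the constraint set $\{W : s_i^+(W) = s_i^{+*}\}$ before taking convex combinations.
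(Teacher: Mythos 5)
Your proof is correct and takes essentially the same route as the paper: both substitute the fixed value $s_i^{+*}$ for the denominator on the constraint set, reducing $H$ to a nonnegative-weighted sum of squared coordinates, which is convex. Your version is slightly more careful in explicitly verifying that the constraint set is affine (so convex combinations stay feasible), a point the paper leaves implicit.
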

\begin{proof}
    Let us rewrite \eqref{eq:HOSglobal}
    \begin{align}
        H(W) = \sum_{(i,j) \in E} \left(\frac{W_{i,j}}{s^+_i(W)}\right)^2 = \sum_{(i,j) \in E} \frac{1}{\left(s_i^{+*}\right)^2} \sum_{j \in V_i} W_{i,j}^2.
    \end{align}
    By the assumption $s^+_i(W) = s^{+*}$, we obtain that $H(W)$ is a sum of convex functions, which shows that $H(W)$ is convex.
    This concludes the proof.
\end{proof}

\section{Conclusion} \label{sec:discussion}

In this paper, we introduced the Feature-Based Network Construction (FBNC) framework, a novel approach for constructing weighted digraphs that satisfy predefined structural features exactly. By leveraging gradient descent methods, FBNC provides a solution to the inverse problem of network (re)construction, enabling the generation of networks tailored to specific high-level features. This framework supports a range of applications, including random network generation, reconstruction of networks with anonymized representations, and imputation of incomplete network data. Additionally, FBNC's implicit regularization offers the ``what-if analysis'', a means for implementing targeted network features with minimal deviation from the original network.

Algorithmically, our approach is parameter-free, meaning that we can regularize the resulting graph in both the $L^2$- or $L^1$-norm without the need for introducing explicit regularization parameters.
In fact, we showed that explicit regularization methods like Ridge or Lasso regularization are infeasible approaches for realizing a desired feature in the network.
Furthermore, we show how the steepest feasible descent can be computed efficiently for bounded graphs and Markovian graphs.

Future work in hard constraint sampling of networks is to aim for larger network instances by increasing the scalability of the algorithm used, for example, via a steepest descent estimation with only two function evaluations \citet{Spall_1992}.
Additionally, the reconstruction of networks can benefit from a careful estimation of the link set $E$ through methods discussed by \citet{Squartini_Caldarelli_Cimini_Gabrielli_Garlaschelli_2018} in case the link set is not observable. 
Finally, it would be of interest to delve into the possibilities for obtaining a desired distribution of graphs, possibly by restraining the set of features and tailoring the distribution of initial networks.

\bibliography{References.bib}

\appendix

\addcontentsline{toc}{section}{Appendix}
\section{Structural Features} \label{sec:sf}

\subsection{Reciprocity} \label{sf:reciprocity}

Reciprocity is the tendency of a node $i$ to reciprocate an incoming link from node $j\neq i$, by forming a link towards node $j$. It is a network measure often used in social network analysis \citep{wasserman1994social}. Also, it is exclusively used in directed networks, as for undirected networks all links are reciprocated by construction. The commonly used measure for reciprocity is to count the number of reciprocated links and divide this by the total number of links \citep{wasserman1994social}. As this yields a discontinuous function, we cannot use this in our setting because of Condition \textbf{(A1)}. Therefore, we define a new continuous measure for reciprocity:
\begin{align*}
    r(W) = C^{-1} \sum_{(i,j) \in E : (j,i) \in E:i\leq j} \min\left(W_{i,j}, W_{j,i}\right),
\end{align*}

\noindent where we take the sum of the minimal values for each reciprocal relation and use some normalizing constant $C$.
As the min($\cdot, \cdot$) operator is non-differentiable, we can approximate the minimum in the following fashion:
\begin{align*}
    r(W) \approx \Tilde{r}(W) = C^{-1} \sum_{(i,j) \in E : (j,i) \in E:i\leq j} -\frac{1}{\xi} \log \left( \frac{ e^{-\xi W_{i,j}} + e^{-\xi W_{j,i}}}{N}\right)
\end{align*}

\noindent This approximation $\Tilde{r}(W)$ becomes asymptotically accurate to $r(W)$ for $\xi \xrightarrow{} \infty$. 
The minimum of this measure is 0, when there are no reciprocal relations and so $W_{i,j} = 0$ for all $i\neq j$. 
The maximum is $1$ when $W_{i,j} = W_{i,j}$ for all $i\neq j$.

\subsection{Triangular closures} \label{sf:3cycles}

\noindent Triangular closures measured using the number of 3-cycles \citep{luce1949method}. 
A 3-cycle exists if node $i$ connects with node $j\neq i$, given that node $j$ has a connection with node $k\neq j$ and node $k$ a connection with node $i\neq k$. 
It is an extension of reciprocity, which can be considered a 2-cycle, and is also often used in social network analysis \citep{holland1971transitivity, watts1998collective}:
\begin{align*}
    c(W) = C^{-1} \sum_{(i,j,k) : (i,j) \in E, (j,k) \in E, (k,i) \in E } \text{min}(W_{i,j}, W_{j,k}, W_{k,i}),
\end{align*}

\noindent where $C$ is chosen such that $c(W) \in [0,1]$.
We can approximate the min($\cdot, \cdot$) operator in a similar fashion as done for reciprocity in Section~\ref{sf:reciprocity}.

\subsection{Modularity} \label{sf:mod}

\noindent Modularity is a measure of a network's clustering \citep{Leicht_Newman_2007}, and is computed by 
\begin{align*}
    Q(W) = \frac{1}{2 m} \sum_{(i,j) \in E}\left[W(i, j) - \frac{s^-_i(W) s^+_j(W)}{2 m}\right] \delta_{c_i, c_j}, 
\end{align*}
where $m = \sum_{(i,j) \in E} W_{i,j}$, and $\delta_{c_i, c_j} = 1$ if both $i$ and $j$ share the same cluster and $0$ otherwise.

\subsection{Assortativity} \label{sf:assortativity}

\noindent Assortativity is a measure to describe the tendency for nodes to form connections with other nodes that are similar, also referred to as homophilic behavior \citep{newman2002assortative}. Generally, the degree of a node is used to express assortativity; however, other sorts of node attributes can be used too. Depending on the nature of the variable used to define similarity (categorical, ordinal, or continuous), we can construct different assortativity measures. Here, we will provide an example using the in-strength as (continuous) similarity measure, see (\ref{eq:s-}) and
(\ref{eq:s+}):
\begin{align*}
    \rho(W) = \frac{\sigma_{sr}(W)}{\sqrt{\sigma_{s}(W)\sigma_{r}(W)}},
\end{align*}
\vspace{-10pt}
\begin{align*}
    \sigma_{sr}(W) & := C^{-1} \sum_{(i,j) \in E} W_{i,j} (s_i^-(W) - \Hat{s}^+(W)) (s_j^-(W) - \Hat{s}^-(W)), \\[10pt]
    \sigma_{s}(W) &:= C^{-1} \sum_{(i,j) \in E} W_{i,j} (s_i^-(W) - \Hat{s}^+(W))^{2}, \\[10pt]
    \sigma_{r}(W) &:= C^{-1} \sum_{(i,j) \in E} W_{i,j} (s_j^-(W) - \Hat{s}^-(W))^{2},
\end{align*}

\noindent where $\sigma_{sr}$ denotes the weighted covariance between the in-strengths for senders and receivers, and $\sigma_{s}$ and $\sigma_{r}$ denote the weighted variances of the in-strengths for senders and receivers, respectively. 
Furthermore, $C = \sum_{(i,j) \in E} W_{i,j}$, but eventually cancels out, such that $\rho(W) \in [-1,1]$.
It follows that $\rho(W)$ is the weighted correlation between the in-strengths of the nodes.
Finally, $\Hat{s}^+$ is the weighted average of the in-strengths for all senders, and $\Hat{s}^-(W)$ is the weighted average of the in-strengths for all receivers:
\begin{align*}
    \Hat{s}^+(W) &= C^{-1} \sum_{(i,j) \in E} W_{i,j} s_{i}^-(W) = C^{-1} \sum_{i \in V} s_{i}^-(W) s_{i}^+(W), \\
    \Hat{s}^-(W) &= C^{-1} \sum_{(i,j) \in E} W_{i,j} s_{j}^-(W) = C^{-1} \sum_{i \in V} s_{i}^-(W)^2.
\end{align*}

\noindent If $\rho(W) = -1$, it implies nodes with a high in-strength tend to form connections to nodes with a low in-strength and vice versa. If $\rho(W) = 1$, the reverse is the case.

\subsection{Stationary distribution} \label{sf:stat}

\noindent Let $W \in \mathcal{W}^M$ such that it satisfies the conditions of stochasticity; that is, its entries are nonnegative, and its rows sum up to 1. 
Assuming $W$ is irreducible, we find its stationary distribution by solving the system of linear equations 
\begin{align} \label{eq:stat}
\pi = \pi W \quad \text{ and } \quad \sum_{i=1}^N \pi_i = 1.
\end{align}
Specifically, $\pi_i$ describes the long-run fraction of time a random walker modeled by $ W$ spends at a certain node $i$. Hence, it is commonly used as a measure of the importance of nodes in a network, such as used in Google's PageRank algorithm \citep{page1999pagerank}.

\subsection{Kemeny constant} \label{sf:kemeny1}

\noindent Assume that $W\in \mathcal{W}^M$ is stochastic and irreducible. 
The {\em Kemeny constant} is the expected time to reach a $\pi$-randomly drawn node $j$ from (any) node $i$ for transition matrix $W$.
It follows that a low (high) Kemeny constant implies a high (low) connected network.
It can be computed by:
\begin{align} \label{eq:kemeny2}
    K = \mathrm{tr}(D) + 1,
\end{align}
where $D$ denotes the deviation matrix of $W$. More formally,
\[
D = (I - W + \Pi)^{-1} - \Pi,
\]
and ergodic projecter $\Pi  = \lim_{m \to \infty} \frac{1}{m}\sum_{n=1}^m W^n$ is a matrix with rows corresponding to $\pi$, see \citet{kemeny1976finite}.
We refer the reader to \citet{berkhout2019analysis} for a more detailed discussion on the Kemeny constant as a connectivity measure.

\subsection{Effective graph resistance} \label{sf:effective_resistance}

\noindent The effective graph resistance $R$ can be expressed via a graph's Laplacian $L$, where 
\begin{align*}
    L_{i,j} = \begin{cases}
        -W_{i,j} & \text{ if } i \neq j; \\
        \sum_{j=1}^N W_{i,j} & \text{ if } i = j.
    \end{cases}
\end{align*}
Then, assuming the graph's weights are symmetric, $R$ can be written in terms of the non-negative eigenvalues $\mu_1 \leq \ldots \leq \mu_N$ of $L$ \citep{ellens2011effective}:
\begin{align*}
    R = N \sum_{i=2}^N \frac{1}{\mu_i}.
\end{align*}
Moreover, $R$ can be calculated by the mean first passage times of the random walker induced by the symmetric graph $W$ \citep{ghosh2008minimizing, ellens2011effective, franssen2024firstorder}. To that end, let the Markov chain $P(W)$ be defined via
\begin{align*}
    P_{i,j}(W) = \frac{W_{i,j}}{\sum_{j=1}^N W_{i,j}}.
\end{align*}
Then, the mean first passage time matrix of $P(W)$ is defined as
\begin{align*}
M = (I - D + {\bar 1} {\bar 1}^\top \cdot \textup{dg}(D)) \cdot \textup{dg}(\Pi )^{-1},
\end{align*} 
where $\bar{1}$ is a vector of ones of size $N$, and $\textup{dg}(\cdot)$ refers to the diagonal matrix constructed by positioning the elements of the input matrix along its diagonal. Then, $R$ can be written as the sum of mean first passage times:
\begin{align*}
    R = \frac{1}{\sum_{i=1}^N \sum_{j=1}^N W_{i,j}}\sum_{i \neq j} M_{i,j}.
\end{align*}

\end{document}